\newtheorem{thm}{Theorem}[section]
\newtheorem{lem}[thm]{Lemma}
\newtheorem{definition}{Definition}[section]
\newtheorem{note}{Note}[section]
\newcommand{\qedsymb}{\hfill{\rule{2mm}{2mm}}}
\newenvironment{proof}
{\begin{trivlist}
\item[\hspace{\labelsep}{\bf\noindent Proof: }]
}
{\qedsymb\end{trivlist}}
\newcommand{\remove}[1]{}
\newenvironment{th-repeat}[1]{\begin{trivlist}
\item[\hspace{\labelsep}{\bf\noindent Theorem~\ref{#1} }]}%
{\end{trivlist}}
\newenvironment{lem-repeat}[1]{\begin{trivlist}
\item[\hspace{\labelsep}{\bf\noindent Lemma~\ref{#1} }]}%
{\end{trivlist}}
\begin{document}

\title{Optimal Byzantine-resilient Convergence in Oblivious Robot Networks}

\author{Zohir Bouzid \and Maria Gradinariu Potop-Butucaru \and
S\'{e}bastien Tixeuil}

\date{Universit\'e Pierre et Marie Curie - Paris 6, LIP6-CNRS 7606, France}

\maketitle

\begin{abstract}
Given a set of robots with arbitrary initial location and no agreement on a global coordinate system, \emph{convergence} requires that all robots asymptotically approach the exact same, but unknown beforehand, location. Robots are oblivious--- they do not recall the past computations --- and are allowed to move in a one-dimensional space. Additionally, robots cannot communicate directly, instead they obtain system related information only \emph{via} visual sensors. 

We draw a connection between the convergence problem in robot networks, and the distributed \emph{approximate agreement} problem (that requires correct processes to decide, for some constant $\epsilon$, values distance $\epsilon$ apart and within the range of initial proposed values). Surprisingly, even though specifications are similar, the convergence implementation in robot networks requires specific assumptions about synchrony and Byzantine resilience. 

In more details, we prove necessary and sufficient conditions for the convergence of mobile robots despite a subset of them being Byzantine (\emph{i.e.} they can exhibit arbitrary behavior). Additionally,  we propose a deterministic convergence algorithm for robot networks and analyze its correctness and complexity in various synchrony settings. The proposed algorithm tolerates $f$ Byzantine robots for $(2f+1)$-sized robot networks in fully synchronous networks, $(3f+1)$-sized in semi-synchronous networks. 
These bounds are optimal for the class of \emph{cautious} algorithms, which guarantee that correct robots always move inside the range of positions of the correct robots.
\end{abstract}

\section{Introduction} 

The execution of complex tasks in hostile environments (\emph{e.g.}
oceans or planets exploration,  decontamination of radioactive areas,
human search and rescue operations) makes necessary the use of robots as an alternative to human intervention. 
So far, robots have been studied mainly through the prism of engineering or artificial intelligence, with success in the case of single powerful robots. However, many of the envisioned new tasks can not or should not (for cost reasons)  be achieved by an unique robot, hence low cost swarm of cheap mobile robots executing coordinated tasks in a distributed manner is appealing when considering
dangerous environments.
The study of autonomous swarms of robots is also a challenging area for distributed computing, as networks of robots raise a variety of problems related to distributed control and coordination. 

In order to capture the difficulty of distributed coordination in
robot networks two main computational models are proposed in the
literature: the ATOM \cite{SY99} and CORDA models \cite{Pre05}. In
both models robots are considered identical and indistinguishable, can
see each other via visual sensors and operate in look-compute-move
cycles. Robots, when activated, observe the location of the other
robots in the system, compute a new location and move accordingly. The
main difference between the two models comes from the granularity of
the execution of this cycle. 
In the ATOM model, robots executing concurrently
are in phase while in CORDA they are asynchronous (i.e. a robot can
execute the look action for example while another robot performs its move action).

Gathering and convergence are two related fundamental tasks in robot
networks. Gathering requires  robots to \emph{reach} a single point
within finite time regardless of their initial positions while
convergence only requires robots to get \emph{close} to a single
point. More specifically, $\forall \epsilon>0$, there is a time 
$t_\epsilon$ such that all robots are at distance at most $\epsilon$
from each other. 
Gathering and convergence can serve as the basis of many other
protocols, such as constructing a 
common coordinate system or arranging themselves in a specific geometrical pattern.

\paragraph{Related works}

Since the pioneering work of Suzuki and Yamashita~\cite{SY99},
gathering and convergence have been addressed  in \emph{fault-free}
systems for a broad class of settings. Prencipe~\cite{Pre05} studied the problem of gathering in both ATOM and CORDA models, and showed that the problem is intractable without additional assumptions such as being able to detect the multiplicity of a location (\emph{i.e.}, knowing the number of robots that may simultaneously occupy that location). Flocchini \emph{et~al.}~\cite{FPS+05} proposed a gathering solution for oblivious robots with limited visibility in CORDA model, where robots share the knowledge of a common direction given by a compass. The subsequent work by Souissi \emph{et~al.}~\cite{TR:SDY05} consider a system in which compasses are not necessarily consistent initially. Ando \emph{et~al.}~\cite{ando1999dmp} propose a gathering algorithm for the ATOM model with limited visibility. 

The case of \emph{fault-prone} robot networks was recently tackled by several academic studies. Cohen and Peleg~\cite{CP06} study the problem when robots observations and movements are subject to errors. Fault tolerant gathering is addressed in \cite{agmon2004ftg} where the authors study a gathering protocol that tolerates one crash (\emph{i.e.} one robot may stop moving forever), and they also provide an algorithm for the ATOM model with fully synchronous scheduling that tolerates up to $f$ byzantine faults (\emph{i.e.} $f$ robots may exhibit arbitrary behavior), when the number of robots is (strictly) greater than $3f$. In \cite{defago3274fta} the authors study the feasibility of gathering in crash-prone and Byzantine-prone environments and propose probabilistic solutions altogether with detailed analysis relating scheduling and problem solvability.

The specification of convergence being less stringent than that of gathering, it is worth investigating whether this leads to better fault and Byzantine tolerance. In~\cite{ando1999dmp} the authors address convergence with limited visibility in fault-free environments. Convergence with inaccurate sensors and movements is addressed in \cite{cohen6cam}. Fault-tolerant convergence was first addressed in~\cite{cohen2004rcv,cohen2005cpg}, where algorithms based on the convergence to the center of gravity of the system are presented. Those algorithms work in CORDA model and tolerate up to $f$ ($n>f$) crash faults, where $n$ is the number of robots in the system. To our knowledge, none of the aforementioned works on convergence addresses the case of byzantine faults. 

\paragraph{Our contributions}

In this paper we focus on the feasibility of deterministic solutions
for convergence in robots networks that are prone to byzantine
faults and move in a uni-dimentional space. Our contribution is threefold:
\begin{enumerate}
\item We draw a connection between the convergence problem in robot
networks, and the distributed \emph{approximate agreement} problem
(that requires correct processes to decide, for some constant
$\epsilon$), values distance $\epsilon$ apart and within the range of initial values. In particular, our work uses a similar technique as the one presented in \cite{dolev1986raa} and \cite{abraham2005ora} for the problem of approximate agreement with byzantine failures. They propose approximate agreement algorithms that tolerate up to $f$ byzantine failures and require $n>3f$, which has been proven optimal for both the synchronous and asynchronous case.
\item We prove necessary and sufficient conditions for the convergence of mobile robots despite a subset of them being Byzantine (\emph{i.e.} that can exhibit arbitrary behavior), when those robots can move in a uni-dimensional space.
\item We propose a deterministic convergence algorithm for robot networks and analyze its correctness and complexity in various synchrony settings.
The proposed algorithm tolerates $f$ Byzantine robots for $(2f+1)$-sized robot networks in fully synchronous networks, $(3f+1)$-sized in semi-synchronous networks. 
These bounds are optimal for the class of \emph{cautious} algorithms, which guarantee that correct robots always move inside the range of positions of other correct robots.
\end{enumerate}

\paragraph{Outline}
The remaining of the paper is organized as follows: Section~\ref{sec:model} presents our model and robot network assumptions, Sections~\ref{sec:cns} and \ref{sec:lower-bounds} provide necessary and sufficient conditions for the convergence problem with Byzantine failures, Section~\ref{sec:algorithm} describes our protocol and its complexity, while concluding remarks are presented in Section~\ref{sec:conclusion}.

\section{Preliminaries}
\label{sec:model}

Most of the notions presented in this section are borrowed from \cite{SY99,Pre01,agmon2004ftg}.  We consider a network that consists of a finite set of robots arbitrarily deployed in a uni-dimensional space. The robots are devices with sensing, computing and moving capabilities. They can observe (sense) the positions of other robots in the space and based on these observations, they perform some local computations that can drive them to other locations. 

In the context of this paper, the robots are \emph{anonymous}, in the sense that they can not be distinguished using their appearance, and they do not have any kind of identifiers that can be used during the computation. In addition, there is no direct mean of communication between them. Hence, the only way for robots to acquire information is by observing each others positions. Robots have \emph{unlimited visibility}, \emph{i.e.} they are able to sense the whole set of robots.
Robots are also equipped with a multiplicity sensor. This sensor is referred as \emph{simple multiplicity detector}, denoted by $\mathcal{M?}$, if it can distinguish if there are more than one robot at a given position. If it can also detect the exact number of robots collocated in the same point, it is referred as \emph{multiples detector}, denoted in the sequel by $\mathcal{M}$. 
We prove in this paper that $\mathcal{M}$ is necessary in order to deterministically solve the convergence problem in a uni-dimensional space even in the presence of a single Byzantine robot. 

\subsection{System model}
A robot that exhibits discrete behavior is modeled with an I/O automaton~\cite{lynch1996da}, while one with continous behavior will be modeled using a hybrid I/O automaton~\cite{lynch2003ha}. The actions performed by the automaton that models a robot are as follows:
\begin{itemize}
\item \emph{Observation (input type action)}.
  An observation returns a snapshot of the positions of all robots within the visibility range. In our case, this observation returns a snapshot of the positions of \emph{all} robots denoted with $P(t)= \{P_1(t), ... , P_n(t)\}$. The positions of correct robots are referred as $U(t)=\{U_1(t), ... , U_m(t)\}$ such that $m \geq n-f$. Note that $U(t) \subseteq P(t)$. The observed positions are \emph{relative} to the observing robot, that is, they use the coordinate system of the observing robot.
\item \emph{Local computation (internal action)}.
  The aim of this action is the computation of a destination point (possibly using the relative position of other robots that was previously observed);
  \item \emph{Motion (output type action)}.
  This action commands the motion of robots towards the destination location computed in the previous local computation action.
\end{itemize}

The ATOM or SYm model addressed in this paper considers discrete time at irregular intervals. At each
time, some subset of the robots become active and complete an entire
computation cycle composed of the previously described elementary actions (observation, local computation and motion). 
Robots can be active either simultaneously or
sequentially.  Two robots that are active simultaneously observe the
exact same environment (according to their respective coordinate
systems).
 
The \emph{local state} of a robot at time~$t$ is the state of its input/output variables and the state of its local variables and registers. A \emph{network of robots} is modeled by the parallel composition of the individual automata that model each robot in the network. A \emph{configuration} of the system at time~$t$ is the union of the local states of the robots in the system at time~$t$. An \emph{execution} $e=(c_0, \ldots, c_t, \ldots)$ of the system is an infinite sequence of configurations, where $c_0$ is the initial configuration\footnote{Unless stated otherwise, we make no specific assumption regarding the respective positions of robots in initial configurations.} of the system, and every transition $c_i \rightarrow c_{i+1}$ is associated to the execution of a cycle by a subset of robots.

A \emph{scheduler} can be seen as an entity that is external to the system and selects robots for execution. As more power is given to the scheduler for robot scheduling, more different executions are possible and more difficult it is to design robot algorithms.
In the remaining of the paper, we consider that the scheduler is \emph{fair}, that is, in any infinite execution, every robot is activated infinitely often. A scheduler is \emph{$k$-bounded} if, between any two activations of a particular robot, any other robot can be activated at most $k$ times. The particular case of the \emph{fully synchronous} scheduler activates all robots in every configuration. Of course, an impossibility result for a more constrained scheduler (\emph{e.g.} bounded) also holds for a less constrained one (\emph{e.g.} fair), and an algorithm for the fair scheduler is also correct in for the $k$-bounded scheduler or the fully-synchronous scheduler. The converse is not necessarily true. 
 
The faults we address in this paper are \emph{Byzantine} faults. A byzantine (or malicious) robot may behave in arbitrary and unforeseeable way. In each cycle, the scheduler determines the course of action of faulty robots and the distance to which each non-faulty robot will move in this cycle. However, a robot $i$ is guaranteed to move a distance of at least $\delta_i$ towards its destination before it can be stopped by the scheduler.

Our convergence algorithm performs operations on multisets. A multiset or a bag $S$ is a generalization of a set where an element can have more than one occurence. The number of occurences of an element \emph{a} is referred as its \emph{multiplicity} and is denoted by $mul(a)$. The total number of elements of a multiset, including their repeated occurences, is referred as the \emph{cardinality} and is denoted by $|S|$. $\min(S)$(resp. $\max(S)$) is the smallest (resp. largest) element of $S$. If $S$ is nonempty, $range(S)$  denotes the set $[\min(S), \max(S)]$ and $diam(S)$ (diameter of $S$) denotes $\max(S) - \min(S)$.

\subsection{The Byzantine Convergence Problem}

In the following we refine the definition of the \emph{point convergence problem} from~\cite{agmon2004ftg}: given an initial configuration of $N$ autonomous mobile robots $M$ of which are correct ($M \geq N-f$), for every $\epsilon > 0$, there is a time $t_\epsilon$ from which all correct robots are within distance of at most $\epsilon$ of each other.

\begin{definition}[Byzantine Convergence]
\label{def:byz-convergence}
A system of oblivious robots verify the Byzantine convergence specification if and only if $\forall \epsilon > 0, \exists t_\epsilon$ such that $\forall t > t_\epsilon$, $\forall$ i,j $\leq M, \mathit{distance}(U_i(t), U_j(t)) < \epsilon$, where $U_i(t)$ and $U_j(t)$ are the positions of some \emph{correct} robots $i$ and $j$ at time $t$, and where $\mathit{distance}(a,b)$ denote the Euclidian distance between two positions. 
\end{definition}

Definition~\ref{def:byz-convergence} requires the convergence property only from the \emph{correct} robots. Note that it is impossible to obtain the converge of all the robots in the system regardless their behavior since Byzantine robots may exhibit arbitrary behavior and never join the position of correct robots.

\section{Necessary and sufficient conditions for deterministic convergence}
\label{sec:cns}

In this section we address the necessary and sufficient conditions to achieve convergence of robots in systems prone to byzantine failures. We define \emph{shrinking} algorithms (algorithms that eventually decrease the range among correct robots) and prove that this condition is necessary but not sufficient for convergence even in fault-free environments. We then define \emph{cautious} algorithms (algorithms that ensure that the position of correct robots always remains inside the range of the correct robots) and show that this condition, combined with the previous one, is sufficient to reach convergence in fault-free systems. Moreover, we address the necessary and sufficient conditions for convergence in byzantine-prone environments and show that for the problem to admit solutions additional assumptions (\emph{e.g.} multiplicity knowledge) are necessary.

\subsection{Necessary and sufficient conditions in fault-free environments}

By definition, convergence aims at asymptotically decreasing the range of possible values for the correct robots. The shrinking property captures this property. An algorithm is shrinking if there exists a constant factor $\alpha \in (0,1)$ such that starting in any configuration the range of correct robots eventually decreases by a multiplicative $\alpha$ factor. 

\begin{definition}[Shrinking Algorithm] 
An algorithm is \emph{shrinking} if and only if $\exists \alpha \in (0,1)$ such that $\forall t$, if $diam(U(t)) \neq 0, \exists t^\prime > t$, such that $diam(U(t^\prime)) < \alpha*diam(U(t))$, where $U(t)$ is the multiset of positions of correct robots.
\end{definition}

\begin{figure}[htbp]
\begin{center}
\includegraphics[height=6cm]{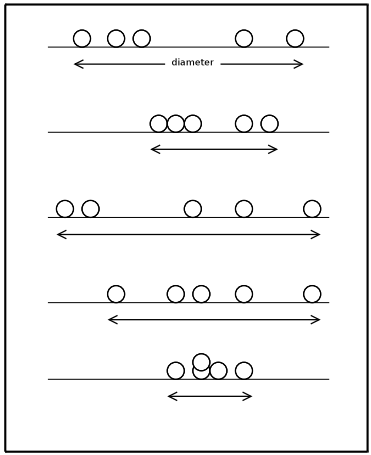}
\caption{Oscillatory effect of a shrinking algorithm}
\label{fig-oscillatory}
\end{center}
\end{figure}

\begin{note}
Note that the definition does not imply that the diameter always remains smaller than $\alpha*diam(U(t))$ after $t^\prime$ (see Figure~\ref{fig-oscillatory}). Therefore, an oscillatory effect is possible: the algorithm alternates between periods where the diameter is increased and decreased. However, each increasing period is followed by a decreasing one as depicted in Figure~\ref{fig-oscillatory}. Therefore a shrinking algorithm is not necessarily convergent.
\end{note}

\begin{lem}
\label{lem:convergence}
Any algorithm solving the convergence problem is necessarily shrinking.
\end{lem}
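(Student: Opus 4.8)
The plan is to derive the shrinking constant directly from the convergence guarantee, by instantiating the $\epsilon$ appearing in the convergence specification with a value proportional to the current diameter. I would fix once and for all an arbitrary constant $\alpha \in (0,1)$, say $\alpha = 1/2$; the crucial point is that this single $\alpha$ can be made to work uniformly for every time $t$, which is exactly what the definition of a shrinking algorithm demands.

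First I would take any execution of the (assumed convergent) algorithm together with any time $t$ at which $diam(U(t)) \neq 0$. Since the diameter is then strictly positive, the quantity $\epsilon = \alpha \cdot diam(U(t))$ is a legitimate positive constant, so I may invoke the Byzantine convergence property (Definition~\ref{def:byz-convergence}) with this particular $\epsilon$. This yields a time $t_\epsilon$ beyond which every pair of correct robots lies within distance $\epsilon$ of one another.

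Next I would choose $t' = \max(t, t_\epsilon) + 1$, so that $t' > t$ as required by the shrinking definition. The remaining step is to translate the pairwise-distance bound into a bound on the diameter: in a uni-dimensional space the diameter $\max(U(t')) - \min(U(t'))$ is exactly the distance between the two extreme correct robots, hence it is itself one of the pairwise distances controlled by $\epsilon$. Therefore $diam(U(t')) < \epsilon = \alpha \cdot diam(U(t))$, which is precisely the inequality sought.

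The argument is essentially a matter of quantifier bookkeeping, and the one place I would flag as a potential pitfall is the order of quantifiers on $\alpha$: the shrinking definition insists on a single $\alpha$ valid for all $t$, so it is essential that $\alpha$ be selected \emph{before} $t$, and that the $\epsilon$ handed to the convergence property depend on $t$ through $diam(U(t))$, rather than letting $\alpha$ itself vary with $t$. Everything else reduces to the elementary observation that, on the line, the maximal pairwise distance coincides with the diameter.
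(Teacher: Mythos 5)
Your proof is correct and rests on the same core observation as the paper's: instantiating the convergence property with $\epsilon$ proportional to the current diameter (times a fixed $\alpha$) forces the diameter at some later time $t'$ below $\alpha \cdot diam(U(t))$. The paper phrases this as a proof by contradiction (assuming non-shrinking and exhibiting a pair of correct robots that stays at distance at least $\alpha \cdot diam(U(t_1))$ forever), whereas your direct version is, if anything, cleaner, since it keeps the quantifier order on $\alpha$ explicit --- a point the paper's own negation of the shrinking definition actually states imprecisely.
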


\begin{proof}
Assume that a convergence algorithm is not shrinking.
Then there exists some constant factor $\alpha \in (0,1)$, and some
time instant $t_1$ such that 
the diameter of correct robots after $t_1$ never decreases by a factor
of $\alpha$ i.e. 
$diam(U(t_2))$ is greater than $\alpha*diam(U(t_1))$ for any $t_2 > t_1$.
Therefore, there will always exist two correct robots that are at
distance of at least 
$\alpha*diam(U(t_1))$, which contradicts the assumption that the algorithm is convergent.
\end{proof}

A natural way to solve convergence is to never let the algorithm increase the diameter of correct robot positions. We say in this case that the algorithm is \emph{cautious}. 
A cautious algorithm is particularly appealing in the context of byzantine failures since it always instructs a correct robot to move inside the range of the positions held by the correct robots regardless of the locations of Byzantine ones. 
The notion of cautiousness was introduced \cite{dolev1986raa} in the context of classical Byzantine-tolerant distributed systems. In the following, we customize the definition of cautious algorithms for robot networks.

\begin{definition}[Cautious Algorithm]
Let $D_i(t)$ be the latest computed destination of robot $i$ up to time $t$ and let $U(t)$ be the positions of the correct robots at time $t$. \footnote{If the latest computation was executed at time $t^\prime \leq t$ then $D_i(t) = D_i(t^\prime)$.} An algorithm is \emph{cautious} if it satisfies the following two conditions:
\begin{itemize}
\item \textbf{cautiousness: } $\forall t,~D_i(t) \in range(U(t))$ for each robot $i$.
\item \textbf{non-triviality: } $\forall t$, if $\exists \epsilon > 0, \exists i,j < M, distance(U_i(t),U_j(t))\geq \epsilon$ (where $U_j(t)$ and $U_i(t)$ denote the positions of two correct robots $i$ and $j$ a time $t$), then $\exists t^\prime>t$ and a correct robot $k$ such that $D_k(t^\prime)\neq U_k(t^\prime)$
(at least one correct robot changes its position whenever convergence is not achieved).
\end{itemize}
\end{definition}

Note that the non-triviality condition ensures progress. That is, it prevents trivial solutions where each robot stays at its current position forever. 

The following two lemmas state some properties of cautious algorithms.

\begin{lem}
\label{diam-SS} 
In the ATOM model, if an algorithm is cautious then $\forall t^\prime >t~ diam(U(t^\prime)) \leq diam(U(t))$.
\end{lem}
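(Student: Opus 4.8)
The plan is to prove the statement by induction on the number of ATOM cycles separating $t$ from $t'$, reducing everything to the claim that the diameter cannot grow across a single transition $c_t \rightarrow c_{t+1}$. Once we establish $diam(U(t+1)) \leq diam(U(t))$ for an arbitrary transition, the general inequality for all $t' > t$ follows immediately by transitivity of $\leq$ along the finite chain of transitions between the two times.

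To handle a single transition, I would fix the interval $I = range(U(t)) = [\min(U(t)), \max(U(t))]$ and argue that every correct robot lies in $I$ at time $t+1$. Let $i$ be a correct robot and distinguish two cases according to whether the scheduler activates $i$ during this transition. If $i$ is not activated, then $U_i(t+1) = U_i(t)$, and since $i$ is correct its position belongs to $U(t) \subseteq I$. If $i$ is activated, it observes the configuration $P(t)$, computes $D_i(t)$, and moves toward it; the ATOM atomicity of the look-compute-move cycle guarantees that the destination it moves toward is precisely the one associated with the current configuration, so by the cautiousness condition $D_i(t) \in I$. Its starting position $U_i(t)$ also lies in $I$ because $i$ is correct. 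In the one-dimensional space the robot travels along the segment joining $U_i(t)$ to $D_i(t)$ --- possibly stopping short of $D_i(t)$ after covering at least $\delta_i$, but always moving toward it --- so $U_i(t+1)$ lies between $U_i(t)$ and $D_i(t)$, hence in the convex set $I$.

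Combining both cases, every correct robot satisfies $\min(U(t)) \leq U_i(t+1) \leq \max(U(t))$, which yields $\min(U(t)) \leq \min(U(t+1))$ and $\max(U(t+1)) \leq \max(U(t))$, and therefore $diam(U(t+1)) \leq diam(U(t))$. The step I expect to require the most care is the activated case, specifically the justification that the point actually reached, $U_i(t+1)$, lies in $I$ rather than merely that the destination does. This rests on two facts that must be made explicit: the convexity of the interval $I$ in one dimension (any point between two members of $I$ is again in $I$), and the fact that in the ATOM model the destination is computed and acted upon within the same atomic cycle, so cautiousness may be invoked with $U(t)$ being the configuration on which $D_i(t)$ is based. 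This is exactly the place where the argument is specific to ATOM: under an asynchronous (CORDA) scheduler a robot could move toward a destination computed from an older, wider range, and the reached position need not remain inside the current, possibly narrower, $range(U(t))$.
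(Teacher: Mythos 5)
Your proof is correct and follows essentially the same route as the paper's: both reduce the claim to a single atomic cycle and argue that ATOM atomicity forces the destination to be computed from the current configuration, so cautiousness plus the convexity of $range(U(t))$ in one dimension keeps every correct robot (activated or not, even if stopped short by the scheduler) inside the range. The paper phrases this contrapositively --- assuming a diameter increase and extracting a single transition where a robot escapes the range --- whereas you argue the invariance directly and are more explicit about the details (non-activated robots, premature stopping, convexity) that the paper leaves implicit.
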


\begin{proof}
Assume that it is not the case. i.e. that
$diam(U(t^\prime))>diam(U(t))$ for some $t^\prime>t$. Then there
exists two 
\textbf{successive} time instants, referred in the following cycles, $t_2>t_1$ such that $t\leq
t_1<t^\prime$,  $t<t_2 \leq t^\prime$ and 
the diameter of correct robots at $t_2$ is strictly greater than the diameter at $t_1$ i.e. $diam(U(t_2))>diam(U(t_1))$.
Thus, there exists at least one correct robot, say $r_1$, that was
inside $range(U(t_1))$ at $t_1$, 
and moved outside it at $t_2$. We prove that this is impossible.

Since cycles are atomic, no robot can move between $t_1$ and the LOOK
step of $t_2$, and the resulting 
snapshot of correct robots at this step is equal to $U(t_1)$. 
Thus, the destination point calculated by $r_1$ at $t_2$ is
necessarily inside $range(U(t_1))$ since the algorithm is cautious.

This contradicts the assumption that $r_1$ moves outside $range(U(t_1))$ at $t_2$, and the lemma follows.
\end{proof}
\remove{
\begin{note}
The preceding lemma does not hold for the CORDA model. Due to asynchrony, the following type of scenario may occur. Assume three correct robots $r_1$, $r_2$ and $r_3$. Assume without loss of generality that $r_1$ and $r_2$ are collocated at the same position initially. $r_1$ calculates some point, say $p_g$, between it and $r_3$ as its destination point and starts moving towards it. Assume that $r_1$ is slow and before it reaches $p_g$, $r_3$ has been activated several times until it becomes closer to $r_2$ than $p_g$. At this moment, say $t_1$, the diameter is less than $dist(r_2, p_g)$.Then when $r_1$ completes its MOVE step and reaches its destination $p_g$, the diameter is equal to $dist(r_2, r_1)=dist(r_2, p_g)$ that is greater than the diameter at $t_1$.
\end{note}
So in the following lemma, we consider the range of correct robots and their destinations and we prove that this range never decreases in the CORDA model if the algorithm is cautious.

\begin{lem}
\label{diam-AS} 
In the CORDA model, if an algorithm is cautious then $ \forall t^\prime > t, diam(U(t)\cup D(t)) \leq diam(U(t^\prime) \cup D(t^\prime))$.
\end{lem}
\begin{proof}
We assume that $\exists t^\prime > t$ such that $diam(U(t^\prime) \cup
D(t^\prime)) > diam(U(t) \cup D(t))$, and 
prove that this leads to a contradiction.
Then there exists at least one correct robot, say $r_k$, whose both
position and destination were inside 
$range(U(t) \cup D(t))$ at $t$, and whose position or destination is outside the range at $t^\prime$. 
Formally, there exists some $t^\prime>t$ such that 
$U_k(t^\prime) \notin range(U(t) \cup D(t))$ or $D_k(t^\prime) \notin range(U(t) \cup D(t))$. 

Assume without loss of generality that $r_k$ is the  only robot in this case between $t$ and $t^\prime$, and distinguish the following two cases: 

\begin{enumerate}
\item The destination point of $r_k$ is outside $range(U(t)\cup D(t))$
at $t^\prime$, that is $D_k(t^\prime) \notin
range(U(t)\cup D(t))$. 
But since no other robot was outside the range after t, and since
$D_k(t^\prime)$ was calculated 
after $t$,  $D_k(t^\prime)$ is necessarily inside the range (by the definition of cautious algorithms), which leads to a contradiction.
\item The position of  $r_k$ is outside $range(U(t) \cup D(t))$ at
$t^\prime$, that is $U_k(t^\prime) 
\notin range(U(t)\cup D(t))$. But since both the precedent position of
$r_k$ and its destination point 
were inside $range(U(t)\cup D(t))$, $r_k$ can only move between these
two points and stays necessarily 
inside $range(U(t) \cup D(t))$ which leads to a contradiction.
\end{enumerate}
The two cases lead to a contradiction which proves our lemma.
\end{proof}
}
\begin{thm}
\label{lem:cands}
Any algorithm that is both cautious and shrinking solves the convergence problem in fault-free robot networks.
\end{thm}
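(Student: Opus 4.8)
The plan is to combine the two hypotheses at precisely the points where each is needed: shrinking supplies a geometric decrease of the diameter along a sparse sequence of times, while cautiousness---through the already-established Lemma~\ref{diam-SS}---guarantees that the diameter can never rebound above a value it has once attained. Since the network is fault-free, every robot is correct, so $U(t)$ is the multiset of \emph{all} robot positions and the convergence specification (Definition~\ref{def:byz-convergence}) reduces to showing that for every $\epsilon>0$ there is a time $t_\epsilon$ with $diam(U(t))<\epsilon$ for all $t>t_\epsilon$. Observe that only the cautiousness clause of a cautious algorithm is invoked here, as the shrinking hypothesis already subsumes the progress guaranteed by non-triviality.

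First I would dispose of the degenerate case: if $diam(U(t_0))=0$ at some time $t_0$, then by Lemma~\ref{diam-SS} the diameter remains $0$ forever after, and convergence holds trivially. Otherwise the diameter is positive at every time at which shrinking is applied. Starting from an arbitrary time $\tau_0$, I would invoke the shrinking property repeatedly to build an increasing sequence $\tau_0<\tau_1<\tau_2<\cdots$ with $diam(U(\tau_{k+1}))<\alpha\cdot diam(U(\tau_k))$ for the fixed constant $\alpha\in(0,1)$ supplied by shrinking; a routine induction then gives $diam(U(\tau_k))<\alpha^{k}\,diam(U(\tau_0))$, which tends to $0$ since $\alpha\in(0,1)$.

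The crucial step is to upgrade this decrease along a subsequence into a bound holding at \emph{all} sufficiently late times. This is exactly where shrinking alone fails---the oscillatory behavior described in the Note following the shrinking definition---and where cautiousness earns its role. By Lemma~\ref{diam-SS}, for every $t\geq\tau_k$ one has $diam(U(t))\leq diam(U(\tau_k))<\alpha^{k}\,diam(U(\tau_0))$. Given $\epsilon>0$, I would pick $k$ large enough that $\alpha^{k}\,diam(U(\tau_0))<\epsilon$ and set $t_\epsilon=\tau_k$; monotonicity then forces $diam(U(t))<\epsilon$ for every $t>t_\epsilon$, establishing the specification.

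I expect no deep obstacle, since the heavy lifting---monotonicity of the diameter under cautiousness---is already carried out by Lemma~\ref{diam-SS}. The only point demanding care is the interplay of the two hypotheses: shrinking may only be applied at times where the diameter is nonzero, and one must check that the monotonicity of Lemma~\ref{diam-SS} covers the whole interval after each $\tau_k$ (not merely the subsequence points $\tau_k$ themselves), so that the oscillations permitted by shrinking cannot push the diameter back above $\epsilon$ between consecutive sample times.
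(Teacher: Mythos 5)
Your proposal is correct, and it is essentially the argument the paper intends: the paper states this theorem without any explicit proof, placing it immediately after Lemma~\ref{diam-SS}, whose monotonicity guarantee is exactly the ingredient you use to convert the geometric decrease along the subsequence $\tau_0<\tau_1<\cdots$ into a bound at all later times. Your handling of the two subtleties---the zero-diameter degenerate case where shrinking cannot be invoked, and the need for monotonicity on whole intervals rather than just at the sample points---is sound, as is your observation that non-triviality is not needed once shrinking is assumed; the only caveat worth recording is that Lemma~\ref{diam-SS} is proved for the ATOM model, so the theorem as you have proved it is pinned to that setting, which is indeed the model the paper works in.
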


\subsection{Necessary and sufficient conditions in Byzantine-prone environments}

In~\cite{Pre05}, Prencipe showed that multiplicity detection is necessary to achieve gathering without additional assumption. The situation is different when only convergence is requested (\emph{e.g.} the algorithm proposed in~\cite{cohen2005cpg} where no such condition is assumed). Interestingly enough, in the following we show that when robots are prone to Byzantine failures, a strongest kind of multiplicity detection becomes necessary in order to enable convergence \emph{via} cautious algorithms. Note that in the presence of byzantine faults, many multiplicity points (\emph{i.e.} points with multiple robots) may be created by the Byzantine robots. Moreover, if the trajectories of two robots intersect, it is relatively easy for the scheduler to stop those robots at exactly the same point to create an additional point of multiplicity. 

We show in the sequel that a simple multiplicity detector $\mathcal{M?}$ that can only distinguish whether multiple robots are at a given position (without returning the exact number of those robots) is not sufficient for cautious algorithms. A stronger detector, referred as \emph{multiples detector} $\mathcal{M}$, that can detect the exact number of robots collocated in the same point, is necessary. 

\begin{lem}
\label{lem:nec}
It is impossible to reach convergence with a cautious algorithm in Byzantine-prone environments with multiplicity detection, even in the presence of a single Byzantine fault.
\end{lem}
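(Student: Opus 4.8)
The plan is to argue by contradiction using an indistinguishability (adversary) argument tailored to the single weakness of $\mathcal{M?}$: it reports only ``single'' versus ``$\geq 2$'' and therefore hides the exact number of robots sharing a point. I would assume some cautious algorithm $\mathcal{A}$ solves Byzantine convergence using only $\mathcal{M?}$ while tolerating a single Byzantine robot, and then exhibit two configurations $C$ and $C'$, each with exactly one Byzantine robot, that induce for \emph{every} correct robot the same observation under $\mathcal{M?}$ (the same occupied points, each carrying the same single/multiple label, expressed in that robot's own coordinate frame), yet whose sets of correct positions have different ranges. Since $\mathcal{A}$ is deterministic and reads only the $\mathcal{M?}$-observation, each correct robot must compute the \emph{same} destination in $C$ and in $C'$; the contradiction will then come from cautiousness and non-triviality pulling this common destination in incompatible directions across the two configurations.

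The gadget I would use is the lone Byzantine robot acting as a ``multiplicity toggle''. In $C$ the Byzantine robot is placed so that the correct robots collapse onto a single occupied point $p$ (a multiplicity created either by the correct robots or, as remarked just before the lemma, by the scheduler halting two crossing trajectories at $p$); then $range(U(t))=\{p\}$ and, by cautiousness, every correct robot is \emph{forced} to pick its current position as its destination. In $C'$ the same Byzantine robot instead inhabits a different point so as to reproduce, in each correct robot's frame, exactly the same occupied points and the same single/multiple labels, while the correct robots now occupy two distinct points and hence $diam(U(t))>0$. Because the $\mathcal{M?}$-observations coincide, $\mathcal{A}$ prescribes in $C'$ the very same ``stay put'' destinations it was forced to prescribe in $C$; thus no correct robot ever moves in $C'$. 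By Lemma~\ref{diam-SS} the diameter is non-increasing, and here it never decreases, so it remains bounded away from $0$ and $\mathcal{A}$ fails to converge on $C'$, contradicting both convergence and the non-triviality clause of cautiousness.

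The crucial point, and the step I expect to be the main obstacle, is to build $(C,C')$ so that it is genuinely indistinguishable under $\mathcal{M?}$ but distinguishable under $\mathcal{M}$. This separation is essential, since a later section shows $\mathcal{M}$ to be sufficient, so the impossibility must rest \emph{only} on the hidden counts and not on information that $\mathcal{M}$ would equally lack (such as which of two co-located robots is the faulty one). The delicate subtlety is that with a single fault a $\geq 2$ point always contains a correct robot, so the isolated correct robot in $C'$ always sees a ``safe'' multiplicity it could legitimately move toward; a naive attempt to pin that robot therefore fails. Consequently the pair must be engineered so that, in the isolated robot's own frame, the very \emph{identity} of the safe target is conflated by the unknown count at the toggled point, while all single/multiple labels still agree. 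This count-dependence is exactly what $\mathcal{M}$ resolves and $\mathcal{M?}$ cannot, so $C$ and $C'$ differ in the exact number of robots at the toggled point even though their $\mathcal{M?}$ views are identical.

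Producing such a conflating pair while simultaneously respecting that each robot perceives positions only in its own coordinate system, and that the scheduler may interrupt a move after the guaranteed progress $\delta_i$, is where essentially all the work lies; once it is in hand, the cautiousness-versus-non-triviality clash of the second paragraph closes the argument and establishes that $\mathcal{M?}$ cannot support cautious convergence even against one Byzantine robot.
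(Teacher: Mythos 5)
There is a genuine gap, and it starts with a misreading of the statement. Despite the lemma's unfortunate wording (``with multiplicity detection''), the paper's proof opens with ``consider a set of robots \emph{without any} multiplicity detection capability'': Lemma~\ref{lem:nec} claims that with \emph{no} multiplicity information at all, a single Byzantine robot defeats every cautious algorithm. Under that reading no delicate engineering is needed: with two occupied points $A$ and $B$ and no labels whatsoever, the three configurations $C_1$ (all correct robots at $A$, the Byzantine one at $B$), $C_2$ (the mirror image) and $C_3$ (correct robots spread over both points) are indistinguishable to every correct robot; cautiousness (via Lemma~\ref{diam-SS}) pins the robots at $A$ in $C_1$ and the robots at $B$ in $C_2$, hence by indistinguishability and determinism nobody ever moves in $C_3$, and the diameter stays equal to $|AB|$ forever. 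The escape move you worry about --- the isolated robot heading for the ``safe'' multiplicity --- simply does not exist in this setting, because a robot cannot even tell which of the two occupied points is the multiple one.

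Your version instead targets the insufficiency of $\mathcal{M?}$ under a single fault, and the construction you yourself flag as ``where essentially all the work lies'' is never produced --- and within your framework it cannot be. Your gadget needs $C$ to have all correct robots collapsed at one point (otherwise cautiousness does not force ``stay''), so $C$ shows one multiple point and one single point, and any $\mathcal{M?}$-indistinguishable $C'$ must again have exactly one robot at the single-labelled point. If that robot is Byzantine, the correct ranges of $C$ and $C'$ coincide and there is no contradiction; if it is correct, it sees a multiple point that is \emph{guaranteed} to contain a correct robot (there is only one Byzantine), so the prescription ``move toward the multiple point'' is cautious in every configuration compatible with its view, and it makes $C'$ converge. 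Worse, since the total number of robots is fixed and the single point carries exactly one robot, the counts at the multiple point are forced to agree, so your $C$ and $C'$ are indistinguishable even under $\mathcal{M}$ --- violating the separation you correctly identify as essential (any pair that $\mathcal{M}$ cannot separate can never yield an impossibility, given that the paper's Algorithm~\ref{alg:1} succeeds with $\mathcal{M}$). This is consistent with how the paper actually handles $\mathcal{M?}$: its insufficiency is a separate statement, Lemma~\ref{lem:suff}, whose proof needs \emph{several} Byzantine robots spread evenly over both points precisely so that both points carry the label ``multiple'' and the escape move is blocked; that argument is not available with a single fault.
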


\begin{proof}
Let $A$ and $B$ be two distinct points in a uni-dimensional space (see
Figure \ref{fig-multiplicity-necessary}), 
and consider a set of robots without any multiplicity detection
capability. We suppose that it is possible to achieve convergence in
this case in presence of a 
single byzantine robot and we show that this leads to a contradiction.

\begin{figure}[htbp]
\begin{center}
\includegraphics[width=7cm]{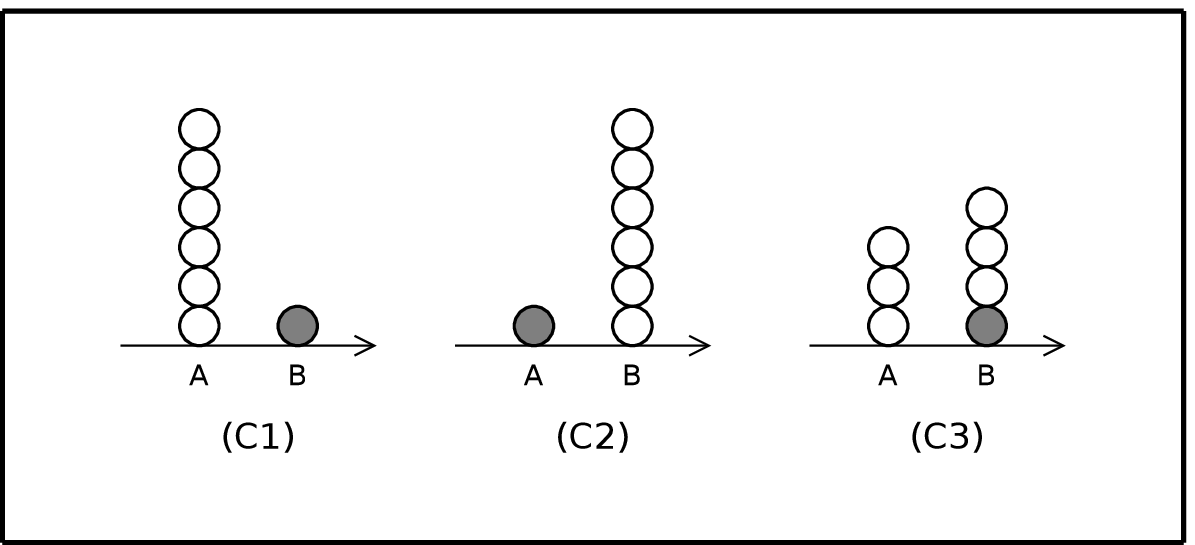}
\caption{Necessity of multiplicity detection to achieve convergence (black robots are byzantine)}
\label{fig-multiplicity-necessary}
\includegraphics[width=7cm]{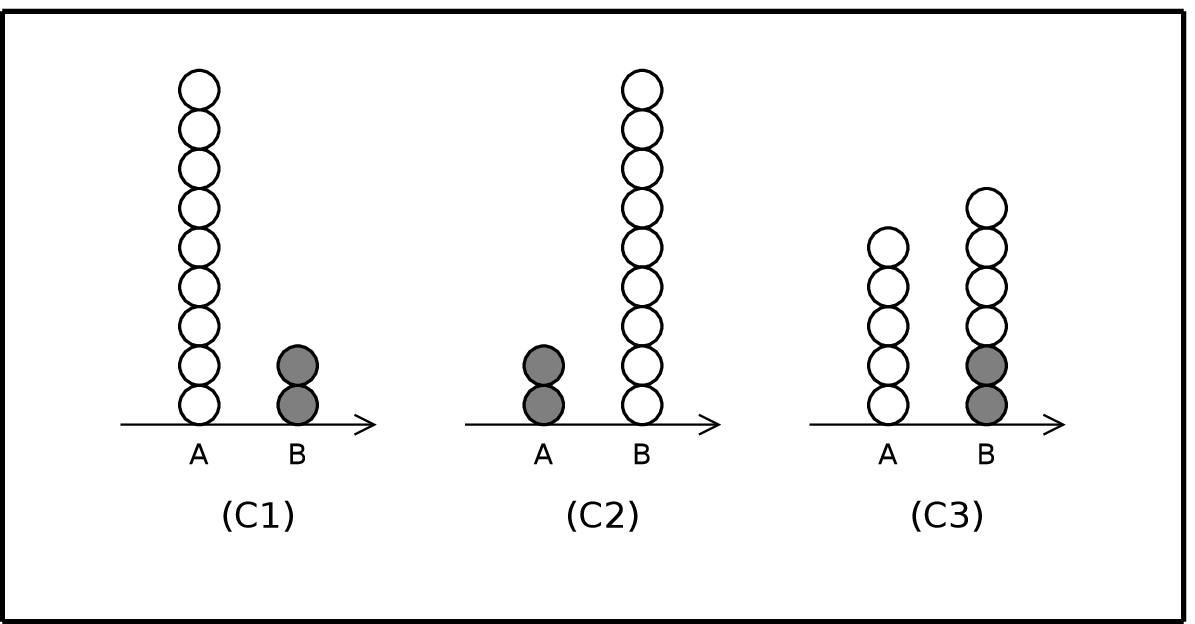}
\caption{Necessity of multiplicity number detection to achieve convergence (black robots are byzantine)}
\label{fig-multiplicity-nb-necessary}
\end{center}
\end{figure}

\begin{enumerate}
\item Let $C_1$ be a configuration where all correct robots are at $A$, and one byzantine robot at $B$.
If the robots at $A$ move, the scheduler can stop them at different
locations which causes the diameter of 
correct robots to increase which contradicts Lemma~\ref{diam-SS}, so they must stay at $A$.
\item Similarly, let $C_2$ be the symmetric configuration where the byzantine robot is at $A$, and the correct ones at $B$.  Then the robots at $B$ cannot move.
\item Let $C_3$ be a configuration where correct robots are spread over $A$ and $B$.  the byzantine robot may be indifferently at $A$ or at $B$. 
Since the robots are not endowed with a multiplicity detection
capability, the configurations $C_1$, $C_2$ and $C_3$ 
are indistinguishable to them. So they stay at their locations and the
algorithm never converge 
which contradicts the assumption that convergence is possible in this case.
\end{enumerate}
This proves that at least a simple multiplicity detector is necessary
to achieve convergence even if a single robot 
is byzantine.
\end{proof}

\begin{lem}
\label{lem:suff}
Multiples detection is necessary to reach Byzantine-tolerant convergence in a uni-dimensional space via cautious algorithms.
\end{lem}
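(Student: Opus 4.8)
The plan is to prove the statement by an indistinguishability argument in the same spirit as Lemma~\ref{lem:nec}, now accounting for the extra power granted by the simple detector $\mathcal{M?}$. The key observation is that $\mathcal{M?}$ reveals only whether a point carries one or several robots, but not how many; hence, as soon as two Byzantine robots are available (i.e. $f \geq 2$), they can be stacked on a single point to fabricate a multiplicity that is completely indistinguishable, under $\mathcal{M?}$, from a point genuinely occupied by several correct robots. First I would assume, for contradiction, that some cautious algorithm solves Byzantine convergence using only $\mathcal{M?}$, and then exhibit a family of configurations over two fixed points $A$ and $B$ that are pairwise indistinguishable under $\mathcal{M?}$ yet have incompatible ranges of correct robots (this is the scenario depicted in Figure~\ref{fig-multiplicity-nb-necessary}).

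Concretely, I would fix $f \geq 2$ and build three configurations. In $C_1$, all $n-f$ correct robots sit at $A$ and all $f$ Byzantine robots sit at $B$, so $range(U) = \{A\}$ with $diam(U)=0$. In $C_2$, the mirror image: the $f$ Byzantine robots at $A$ and the $n-f$ correct ones at $B$, so $range(U) = \{B\}$. In $C_3$, one correct and one Byzantine robot occupy $A$, while the remaining $n-f-1$ correct and $f-1$ Byzantine robots occupy $B$, so $range(U) = [A,B]$. Because $f \geq 2$ and $n-f \geq 2$, both $A$ and $B$ exhibit multiplicity in all three configurations, and therefore every correct robot's $\mathcal{M?}$-snapshot is exactly ``$A$ and $B$ occupied, both with multiplicity.'' The next step is to pin down the forced behaviour in the degenerate configurations: in $C_1$ cautiousness requires $D_i \in range(U) = \{A\}$, hence every correct robot at $A$ must stay at $A$ (any move would create a positive diameter, contradicting Lemma~\ref{diam-SS}); symmetrically, in $C_2$ every correct robot at $B$ must stay at $B$.

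Then I would transfer these forced decisions into $C_3$. Giving the correct robot located at $A$ in $C_3$ the same (adversarially chosen) coordinate system as a correct robot at $A$ in $C_1$ makes their observations literally identical, so the deterministic algorithm computes the same destination and that robot stays at $A$; the analogous matching of the correct robot at $B$ in $C_3$ with one in $C_2$ forces it to stay at $B$. With the Byzantine robots simply remaining in place, $C_3$ is thus stationary: no correct robot ever leaves $\{A,B\}$, so $diam(U)$ remains equal to $distance(A,B)>0$ forever, contradicting convergence. I expect the delicate point to be precisely this transfer step, namely arguing that anonymity together with per-robot coordinate systems lets us equate the observation of a correct robot at $A$ across $C_1$ and $C_3$ (and at $B$ across $C_2$ and $C_3$), even though the \emph{hidden} multiplicities differ --- this is exactly the distinction that $\mathcal{M}$ would detect and that $\mathcal{M?}$ cannot. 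I would also stress the structural requirement $f \geq 2$, needed so that a purely Byzantine point can still display multiplicity; it is what makes $\mathcal{M?}$ strictly weaker than the multiples detector $\mathcal{M}$ and hence establishes that $\mathcal{M}$ is necessary.
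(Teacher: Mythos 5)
Your proposal is correct and takes essentially the same approach as the paper's own proof: the same three-configuration indistinguishability argument over two points ($C_1$ with all correct robots at $A$, the mirror $C_2$, and a mixed $C_3$), using cautiousness via Lemma~\ref{diam-SS} to force the robots in $C_1$ and $C_2$ to stay put and then transferring those forced decisions to $C_3$ to block convergence. Your explicit insistence on $f \geq 2$ (so that a point occupied only by Byzantine robots still displays multiplicity under $\mathcal{M?}$) and the careful treatment of the observation-transfer step make precise details that the paper's proof leaves implicit.
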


\begin{proof}
The Algorithm \ref{alg:1} is a cautious algorithm that converges under
the assumption of multiples detection.
The previous lemma show that convergence cannot be achieved without
additional assumptions. Hence we consider the minimal set of
assumptions: robots are endowed with a simple multiplicity detection.
In the following we assume that convergence can be achieved with only 
simple multiplicity detection and we show that this leads to a contradiction.
Consider a set of robots in a uni-dimensional space prone to byzantine
failures and endowed 
with simple multiplicity detectors. The robots are spread over two
distinct points of the uni-dimensional 
space $A$ and $B$ (see figure \ref{fig-multiplicity-nb-necessary}).
\begin{enumerate}
\item Let $C_1$ be a configuration where all correct robots are at $A$,
and byzantine ones at $B$.  
We suppose that the number of correct robots at $A$ is sufficiently large to tolerate the byzantine robots of $B$. 
If the robots at $A$ move, they may be stopped by the scheduler at
different locations which increase their diameter. This contradicts
Lemma~\ref{diam-SS} because 
the algorithm is cautious. So the correct robots stay at $A$.
\item Consider the symmetric configuration $C_2$ where the correct
robots are at $B$ and 
the byzantine ones at $A$. With the same argument as $C_1$ we find that robots at $B$ stay there.
\item Let $C_3$ be a configuration where correct and byzantine robots are spread evenly between $A$ and $B$. 
Since robots are endowed only with simple multiplicity detectors, the
configurations $C_1$, $C_2$ and $C_3$ are indistinguishable to
them. So no 
robot will move and the algorithm never converges. This proves the lemma. 
\end{enumerate}
\end{proof}

\section{Lower bounds for byzantine resilient convergence}
\label{sec:lower-bounds}

In this section we study the lower bounds for Byzantine-resilient convergence of mobile robots in both fully and semi-synchronous ATOM models. The following lemma shows that any cautious algorithm needs at least $2f+1$ robots in order to tolerate $f$ byzantine robots.

\begin{lem}
\label{impossibility-2f}
It is impossible to achieve convergence with a cautious algorithm if $n\leq 2f$ in the fully-synchronous ATOM model, where $n$ denotes the number of robots and $f$ denotes the number of Byzantine robots.
\end{lem}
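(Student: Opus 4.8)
The plan is to adapt the indistinguishability argument of Lemmas~\ref{lem:nec} and \ref{lem:suff} to the setting where robots already possess the full multiples detector $\mathcal{M}$ (shown necessary above), and to exhibit a single legal execution that never converges. First I would fix two distinct points $A\neq B$ and partition the robots into two nonempty groups $G_1,G_2$ with $|G_1|,|G_2|\le f$, which is possible exactly because $n\le 2f$, placing all of $G_1$ at $A$ and all of $G_2$ at $B$. I would then consider three executions sharing this same physical initial configuration: world (i), in which every robot is correct; world (ii), in which $G_1$ is correct and $G_2$ is Byzantine; and world (iii), in which $G_2$ is correct and $G_1$ is Byzantine. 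In worlds (ii) and (iii) the Byzantine group is simply kept fixed at its point by the scheduler.

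The crux is a round-by-round induction showing that in every one of the three worlds no robot ever moves, so that the physical configuration is reproduced forever. For the inductive step, in world (ii) all correct robots ($G_1$) are collocated at $A$, so $range(U)=\{A\}$ and cautiousness forces each of their destinations to equal $A$; hence $G_1$ stays and, by stipulation, $G_2$ stays at $B$, preserving the configuration (this also agrees with Lemma~\ref{diam-SS}). Symmetrically $G_2$ remains at $B$ in world (iii). The key transfer step is that, since the full physical configuration is identical across the three worlds, each robot of $G_1$ obtains exactly the same snapshot --- the same relative positions and, thanks to $\mathcal{M}$, the same multiplicities at $A$ and at $B$ --- in worlds (i) and (ii); by determinism it computes the same destination, namely $A$. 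Likewise each robot of $G_2$ computes $B$ in worlds (i) and (iii). Hence in world (i) nobody moves either, the configuration is reproduced, and the induction closes. Full synchrony is used precisely here: all robots act on the same snapshot in each round, which lets the invariant be carried forward uniformly.

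Finally, world (i) is a legal execution that the algorithm must handle (zero Byzantine robots, hence at least $n-f$ correct), yet the correct robots remain forever split between $A$ and $B$, so $diam(U(t))=distance(A,B)>0$ for all $t$, contradicting convergence. The main obstacle I anticipate is the careful bookkeeping of the per-robot coordinate systems needed to justify that the snapshots are genuinely ``identical'' across worlds, and arranging the construction so the induction avoids circularity --- defining the Byzantine robots to stay put outright, rather than to mimic correct robots, removes the circularity cleanly. I would also note that stating the bound for the fully-synchronous scheduler yields the strongest form of the impossibility, since it then propagates to every weaker (e.g. $k$-bounded or fair) scheduler.
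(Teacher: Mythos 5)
Your proof is correct and takes essentially the same approach as the paper: the paper places $f$ robots at $A$ and $n-f$ at $B$ and runs the same indistinguishability argument over three configurations (all correct robots at $A$, all correct robots at $B$, correct robots split), concluding that no robot can ever move and hence convergence fails. Your differences are only refinements of that argument --- invoking cautiousness directly on a singleton range instead of Lemma~\ref{diam-SS}, making the determinism-plus-induction transfer step explicit, and deriving the final contradiction in the fault-free world rather than in a mixed configuration.
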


\begin{proof}
We assume that convergence is possible for $n \leq 2f$ and we show that this leads to a contradiction. We consider a set of $n$ robots, $f$ of which are faulty and assume the robots are spread over two points of the uni-dimensional space: $A$ and $B$. There are $f$ robots at point $A$ and $n-f$ robots at point $B$. Note that because $n \leq 2f$, each point contains at least $n-f$ robots (See figure \ref{fig-lower-bounds-SYM} for the case where $n=5$ and $f=3$).

Let $C_1$ be a configuration where all the correct robots $(n-f)$ are at $A$. The diameter is equal to $0$ and by Lemma\ref{diam-SS}, the diameter of correct robots never decreases if the algorithm is cautious. So the robots at~$A$ can not move. Otherwise, the diameter may increase.

Let $C_2$ be a configuration where all the correct robots are at $B$. These must not move and the argument is similar to the precedent case.

Let $C_3$ be a configuration where the correct robots are spread over
$A$ and $B$. Since the three configurations $C_1$, $C_2$ and $C_3$ 
are indistinguishable, the robots at $A$ and $B$ do not move and the
algorithm never converges, 
which contradicts the assumption that convergence is possible with $n \leq 2f$.

\begin{figure}[htbp]
\begin{center}
\includegraphics[width=7cm]{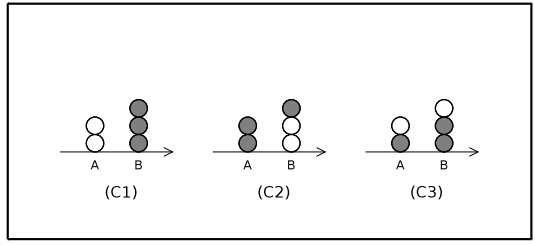}
\caption{Lower bounds for convergence in fully-synchronous ATOM model ($n$ = 5, $f$ = 3, black robots are byzantine)}
\label{fig-lower-bounds-SYM}
\end{center}
\end{figure}
\end{proof}

The following lemma provides the lower bound for the semi-synchronous case. 

\begin{lem}
\label{lem:lb}
Byzantine-resilient convergence is impossible for $n \leq 3f$ with a cautious algorithm in the semi-synchronous ATOM model and a $2$-bounded scheduler.
\end{lem}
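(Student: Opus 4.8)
The plan is to mirror the indistinguishability argument used for the fully-synchronous $2f$ bound (Lemma~\ref{impossibility-2f}), but to exploit the extra power that the semi-synchronous $2$-bounded scheduler grants the adversary: the ability to activate two groups of correct robots at \emph{different} instants, so that the $f$ Byzantine robots can present themselves on one side of the configuration to the first group and on the opposite side to the second group. This ``side-switching'', impossible when all robots are activated simultaneously, is precisely what raises the threshold from $2f$ to $3f$.

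Concretely, I would fix two distinct points $A \ne B$ and partition the $n \le 3f$ robots into three groups: a Byzantine group $Z$ of size $f$, and two correct groups $X$ (placed at $A$) and $Y$ (placed at $B$). Assuming the non-trivial case $n \ge f+2$, since $|X|+|Y| = n-f \le 2f$ I can choose the split so that $1 \le |X|,|Y| \le f$. The first step is to show that an $X$-robot, when it observes the snapshot consisting of $|X|+f$ robots at $A$ and $|Y|$ robots at $B$, must compute $A$ as its destination. For this I would exhibit an auxiliary configuration producing the \emph{same} observed snapshot in which all $n-f$ correct robots sit at $A$ and the $f$ Byzantine robots are split as $f-|Y|$ copies at $A$ and $|Y|$ copies at $B$ (the counts match because $(n-f)+(f-|Y|) = |X|+f$, using $|X|+|Y| = n-f$, and $|Y| \le f$ makes the split legal). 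In that configuration $range(U) = \{A\}$, so cautiousness forces every correct destination to equal $A$; since robots are oblivious, the destination depends only on the snapshot, and hence the $X$-robot computes $A$ on this snapshot in every execution. A symmetric argument (using $|X| \le f$) shows that a $Y$-robot observing $|X|$ robots at $A$ and $|Y|+f$ at $B$ must compute $B$. Because the matched pairs of configurations exhibit identical multiplicities at $A$ and at $B$, the argument survives even when robots are equipped with the exact \emph{multiples detector} $\mathcal{M}$.

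The second step assembles these two facts into a single non-converging execution. I keep $X$ fixed at $A$ and $Y$ fixed at $B$, and let the scheduler alternate: move the whole Byzantine group $Z$ to $A$ and activate $X$ (which then observes $(|X|+f,\,|Y|)$ and, by the first step, stays at $A$), then move $Z$ to $B$ and activate $Y$ (which observes $(|X|,\,|Y|+f)$ and stays at $B$), and repeat forever. A schedule such as $\ldots, X, Z, Y, Z, \ldots$ keeps every robot activated at most twice between two activations of any other robot, so it is $2$-bounded, and it is fair. Along this execution no correct robot ever leaves its point, so $diam(U(t)) = \mathit{distance}(A,B) > 0$ for all $t$, contradicting convergence.

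The delicate point---and the main obstacle---is the first step: justifying that the deceptive snapshots truly coincide with genuine all-correct configurations while respecting the global count of exactly $f$ Byzantine robots. This is where the bound $n \le 3f$ is used in full: it is exactly the condition under which $n-f$ can be split into two groups of size at most $f$, which is what allows the Byzantine robots to ``cover'' the opposite group in each matching configuration. I would also remark that the staggering is essential---under a fully-synchronous scheduler $X$ and $Y$ would be activated together and $Z$ could occupy only one of $A$, $B$, so only one group could be deceived at a time, recovering the weaker $2f$ threshold of Lemma~\ref{impossibility-2f}.
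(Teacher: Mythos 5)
Your proof is correct and follows essentially the same strategy as the paper's: make each group of correct robots observe a snapshot indistinguishable from one in which all correct robots are co-located (so cautiousness pins them in place), while the Byzantine robots switch sides between the staggered activations of the two groups under a fair, $2$-bounded alternating schedule. The only differences are minor: the paper fixes the specific split ($n-2f$ correct robots at $A$, $f$ at $B$) and handles $n \leq 2f$ by citing Lemma~\ref{impossibility-2f}, whereas your general split $1 \leq |X|,|Y| \leq f$ covers the whole range $f+2 \leq n \leq 3f$ in one construction, and you pin robots by applying cautiousness directly ($range(U)$ a singleton forces the destination) rather than via Lemma~\ref{diam-SS}.
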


\begin{proof}
By Lemma \ref{impossibility-2f}, convergence is impossible for $n \leq
2f$ in the fully-synchronous ATOM model, so it is also impossible in the semi-synchronous case.
Assume that there exists a cautious algorithm that achieves
convergence with $2f < n \leq 3f$.

Let A and B be two distinct points in a uni-dimensional space such that $(n-f)$ robots are located at A 
and the remaining f robots are located at B (see Figure \ref{fig-impossibility-3f} for $n=6$ and $f=2$).

\begin{figure}[htbp]
\begin{center}
\includegraphics[width=7cm]{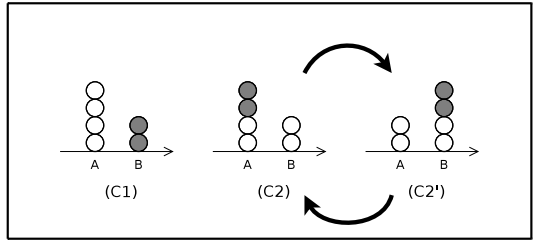}
\caption{Impossibility of convergence in SYM with $n\leq 3f$, black robots are byzantine}
\label{fig-impossibility-3f}
\end{center}
\end{figure}

Let $C_1$ be a configuration where all correct robots are at A and the byzantine ones at B. Note that since the correct robots are at the same point, the diameter is 0.
There are two possible cases:
\begin{enumerate}

\item The robots at A move when activated: since the algorithm is cautious, the only possible direction is towards B. 
When moving towards B, the robots may be stopped by the scheduler at different locations which causes the diameter to increase and this contradicts Lemma \ref{diam-SS}.

\item The robots at A do not move: the only possible action for robots in this configuration since they cannot move.
\end{enumerate}

Let $C_2$ be another possible configuration where the byzantine robots are at A, and the correct ones are spread over A and B as follows: $f$ correct robots at B and the remaining $(n-2f)$ at A.

 Note that $C_1$ and $C_2$ are indistinguishable by the individual robots
and assume the following scenario: The scheduler activates robots at
A. Since the configurations $C_1$ and $C_2$ are equivalent, robots
at A do not move. Then, the scheduler moves $n-2f \leq f$ faulty
robots from A to B which leads to the symmetric 
configuration $C_2^\prime$ and robots at B do not move neither. The
same scenario is repeated 
infinitely and no robot will ever move which prevents the algorithm to converge.
\end{proof}

\section{Deterministic Approximate Convergence}
\label{sec:algorithm}

In this section we propose a deterministic convergence algorithm and prove its correctness and optimality in the ATOM model. Algorithm~\ref{alg:1}, similarly to the approximate agreement algorithm in~\cite{dolev1986raa}, uses two functions, $\mathit{trim}_f(P(t))$ and $\mathit{median}(P(t))$. The former removes the $f$ largest and $f$ smallest values from the multiset given in parameter. The latter returns the median point in the input range. Using Algorithm~\ref{alg:1}, each robot computes the median of the positions of the robots seen in its last LOOK cycle ignoring the $f$ largest and $f$ smallest positions.

\begin{algorithm}
\caption{Byzantine Tolerant Convergence}          
\label{alg-FS}                  
\begin{algorithmic}
\STATE \textbf{Functions}:\\
\STATE $trim_f$: removes the $f$ largest and $f$ smallest values from the multiset given in parameter.
\STATE $median$: returns the points that is in the middle of the range of points given in parameter.
\STATE

\STATE   \textbf{Actions}:\\

\STATE move towards  $median(trim_f(P(t)))$
\end{algorithmic}
\label{alg:1}
\end{algorithm}

In the following we prove the correctness of Algorithm \ref{alg:1} in fully-synchronous and 
semi-synchronous ATOM models. 

In order to show that Algorithm \ref{alg:1} is convergent we prove first that it is cautious then we prove that it satisfies the specification of a shrinking algorithm. 

\subsection{Properties of Algorithm \ref{alg:1}}
In this section we propose a set of lemmas that will be further used in the construction of the convergence proof of our algorithm. 
In the following we recall a result related to the functions $trim$ and $range$ proved in \cite{dolev1986raa}.

\begin{lem}[\cite{dolev1986raa}]
\label{range-trim}
$range(trim_f(P(t))) \subset range(U(t))$.
\end{lem}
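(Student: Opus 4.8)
The plan is to show that removing the $f$ largest and $f$ smallest positions from the full snapshot $P(t)$ always produces a multiset whose range is contained in the range of the correct robots $U(t)$. The key observation is that at most $f$ robots in $P(t)$ are Byzantine, so at least $|P(t)| - f$ of the trimmed-away extreme values could be Byzantine, but crucially neither extreme of the trimmed set can fall strictly outside $range(U(t))$.

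First I would argue the lower bound. Let $x = \min(trim_f(P(t)))$; by definition of $trim_f$, there are exactly $f$ values of $P(t)$ strictly below $x$ (the $f$ smallest, which were removed). Among those $f$ removed smallest values, at most $f$ are Byzantine, but more to the point, there are at most $f$ Byzantine robots in total. Hence among the $f$ smallest removed values together with $x$ itself, we cannot have all of them be below $\min(U(t))$ unless more than $f$ robots lie below $\min(U(t))$, which is impossible since any position strictly below $\min(U(t))$ must belong to a Byzantine robot. Concretely: at most $f$ positions in $P(t)$ lie strictly below $\min(U(t))$ (these must all be Byzantine), so after discarding the $f$ smallest we have $x = \min(trim_f(P(t))) \geq \min(U(t))$. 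By the symmetric argument applied to the $f$ largest removed values, $\max(trim_f(P(t))) \leq \max(U(t))$. Combining the two gives $range(trim_f(P(t))) \subseteq range(U(t))$.

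I would then note that the inclusion can be strengthened or at least justified as stated: since $U(t) \subseteq P(t)$ and $trim_f$ removes only $f$ values from each end while the correct robots number at least $n - f$, the trimmed set still contains correct positions, which anchors the range inside $range(U(t))$. The main subtlety to handle carefully is the counting argument at each extreme — one must verify that discarding the $f$ smallest values guarantees that no surviving value can be below $\min(U(t))$, using only the fact that fewer than $f+1$ robots can be Byzantine and that every correct position lies in $[\min(U(t)), \max(U(t))]$ by definition.

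The step I expect to be the main obstacle is making the boundary counting fully rigorous when there are ties or multiplicities at the extremes (for instance, when several Byzantine robots share the position $\min(U(t))$, or when a Byzantine robot sits exactly at a correct robot's extreme position). The multiset nature of $P(t)$ means I must reason about multiplicities rather than distinct points, and I should confirm that the pigeonhole count ``at most $f$ values strictly below $\min(U(t))$'' still forces the trimmed minimum to be at least $\min(U(t))$ even in degenerate configurations; this is where I would invoke the result as originally proved in \cite{dolev1986raa} to avoid re-deriving the full combinatorial argument.
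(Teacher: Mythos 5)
Your proposal is correct, and it is in fact more than the paper offers: the paper states this lemma without proof, simply citing \cite{dolev1986raa}, and your pigeonhole argument is exactly the standard one behind that citation. The decisive step is the one you state concretely --- every position of $P(t)$ strictly below $\min(U(t))$ must be Byzantine, hence there are at most $f$ of them, they form a prefix of the sorted order, and so all of them are discarded by $trim_f$, giving $\min(trim_f(P(t))) \geq \min(U(t))$, with the symmetric bound at the maximum. Two small remarks: the tie/multiplicity issue you flag as the ``main obstacle'' is already resolved by your own phrasing, since the count is over positions \emph{strictly} below $\min(U(t))$ and ties at $\min(U(t))$ are harmless, so no appeal back to \cite{dolev1986raa} is needed; and your earlier claim that ``there are exactly $f$ values of $P(t)$ strictly below $x$'' is false under ties (it should be ``at most $f$''), but your final argument never uses it. One should also note that the statement presupposes $n > 2f$ so that $trim_f(P(t))$ is nonempty, which is the regime in which the paper applies it.
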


A direct consequence of the above property is that Algorithm
\ref{alg:1} is cautious for $n>2f$.
\begin{lem}
\label{cautious-FS}
Algorithm \ref{alg:1} is cautious for $n>2f$.
\end{lem}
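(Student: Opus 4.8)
The plan is to verify the two defining clauses of a cautious algorithm --- \textbf{cautiousness} and \textbf{non-triviality} --- directly from the definition of Algorithm~\ref{alg:1} together with Lemma~\ref{range-trim}. The first thing I would record is that the hypothesis $n>2f$ is exactly what makes the protocol well defined: $trim_f(P(t))$ deletes the $f$ smallest and the $f$ largest observed positions, leaving $|P(t)|-2f=n-2f\ge 1$ positions, so $\min$, $\max$, and hence $median(trim_f(P(t)))=\tfrac12\bigl(\min(trim_f(P(t)))+\max(trim_f(P(t)))\bigr)$ all exist. Without $n>2f$ the destination is undefined and the claim is vacuous, which is precisely why the bound appears in the statement.

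For the cautiousness clause, I would fix a correct robot $i$ and a time $t$, and let $t'\le t$ be the last cycle at which $i$ executed its LOOK--COMPUTE--MOVE, so that $D_i(t)=D_i(t')=median(trim_f(P(t')))$. Since the midpoint of a range lies inside that range, $D_i(t)\in range(trim_f(P(t')))$, and Lemma~\ref{range-trim} gives $range(trim_f(P(t')))\subset range(U(t'))$; hence the destination computed by $i$ lies in $range(U(t'))$. The only genuine gap is that the specification requires $D_i(t)\in range(U(t))$ at the \emph{query} time $t$, whereas the argument so far controls it only at the \emph{computation} time $t'\le t$. I expect this to be the main obstacle. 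I would close it with a single induction over cycles that simultaneously establishes (a) every freshly computed destination lies in the current range of correct positions, and (b) the range of correct positions never expands. Clause (a) is the paragraph above; clause (b) then follows because, by (a), between $t'$ and $t$ every correct robot moves only toward a point already inside the current range, so no correct position can leave it. This is the argument of Lemma~\ref{diam-SS}, but I would reproduce it inside the induction rather than cite it, so as not to assume cautiousness circularly. Consequently $range(U(t))\subseteq range(U(t'))$ and $D_i(t)\in range(U(t))$.

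For non-triviality, suppose at some time $t$ there are two correct robots at distance at least $\epsilon$, and write $L=\min(U(t))<\max(U(t))=X$. Any computed destination is the single point $m=median(trim_f(P))$, which by Lemma~\ref{range-trim} satisfies $m\in range(U)\subseteq[L,X]$; since $L\neq X$, the value $m$ cannot coincide with both endpoints, so the correct robot sitting at whichever endpoint $m$ misses is instructed to move to $m\neq$ its own position. By fairness that robot is activated at some $t'>t$, yielding $D_k(t')\neq U_k(t')$ as required. In the fully synchronous setting this is immediate: all correct robots observe the same snapshot and hence compute the same midpoint $m$ at once, so they cannot all remain fixed at two distinct positions. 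I would note that the cautiousness argument itself is model-independent, while it is this non-triviality clause that genuinely exploits the common snapshot, completing the proof that Algorithm~\ref{alg:1} is cautious for $n>2f$.
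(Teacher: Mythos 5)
Your core argument is exactly the paper's: the paper proves this lemma in one line, as a ``direct consequence'' of Lemma~\ref{range-trim}, the point being that each computed destination $median(trim_f(P(t)))$ lies in $range(trim_f(P(t)))\subset range(U(t))$, and $n>2f$ guarantees $trim_f(P(t))$ is nonempty so the destination is well defined. Had you stopped there --- i.e., read the cautiousness condition as a constraint on a destination at the time it is computed, evaluated against the snapshot it was computed from, which is precisely the property that the proof of Lemma~\ref{diam-SS} invokes --- your proof would coincide with the paper's.

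The problem is the step you yourself single out as ``the main obstacle'': your proposed fix is a non sequitur, and the strengthened claim it targets is in fact false. From (a) $D_i(t)=D_i(t')\in range(U(t'))$ and (b) $range(U(t))\subseteq range(U(t'))$ you conclude $D_i(t)\in range(U(t))$; but membership in the \emph{larger} set together with shrinkage of the range yields nothing --- you would need the reverse inclusion. Concretely, in the semi-synchronous model take $n=3$, $f=1$, correct robots $A$ at $0$ and $B$ at $10$, Byzantine $Z$ at $20$. Activate only $A$: it sees $\{0,10,20\}$, trims to $\{10\}$, computes $D_A=10$, and is stopped by the scheduler at $5$. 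Then $Z$ jumps to $-100$ and only $B$ is activated: it sees $\{5,10,-100\}$, trims to $\{5\}$, computes $D_B=5$ and moves there. Now $range(U)=[5,5]$ while $A$'s latest computed destination is still $10$. So no induction can close the gap; the only tenable reading (and the paper's implicit one) is the computation-time reading, under which there is nothing left to prove beyond Lemma~\ref{range-trim}. A similar objection hits your non-triviality clause outside the fully synchronous model: the endpoint robot ``missed'' by $m$ recomputes a \emph{new} median from a \emph{new} snapshot when it is eventually activated, and for $2f<n\leq 3f$ an adversary can arrange the Byzantine positions so that every activated robot's median coincides with its own position forever --- this is essentially the construction behind the impossibility result of Lemma~\ref{lem:lb}. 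Your common-snapshot argument for non-triviality is valid only in the fully synchronous setting (as you note); in the semi-synchronous case the paper needs $n>3f$, where progress follows from the shrinking property (Lemma~\ref{shrinking_sym}) rather than from a per-snapshot argument.
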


\begin{lem}
\label{inside_range}
$range(trim_f(U(t))) \subseteq  range(trim_f(P(t)))$ when $n>3f$.
\end{lem}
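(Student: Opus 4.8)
The plan is to prove the set inclusion $range(trim_f(U(t))) \subseteq range(trim_f(P(t)))$ by showing that the interval $[\min(trim_f(U(t))), \max(trim_f(U(t)))]$ is contained in $[\min(trim_f(P(t))), \max(trim_f(P(t)))]$. Since both objects are closed intervals determined by their endpoints, it suffices to establish the two endpoint inequalities $\min(trim_f(P(t))) \leq \min(trim_f(U(t)))$ and $\max(trim_f(U(t))) \leq \max(trim_f(P(t)))$. I would first recall the relevant cardinalities: $P(t)$ has $n$ elements of which at most $f$ are Byzantine, so $U(t)$ has $m \geq n-f$ correct positions, and the hypothesis $n > 3f$ is what makes the counting arguments go through.

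First I would analyze $\min(trim_f(P(t)))$ versus $\min(trim_f(U(t)))$. The key observation is that $trim_f(U(t))$ discards the $f$ smallest correct values, so $\min(trim_f(U(t)))$ is the $(f+1)$-st smallest value among the correct positions; there are at least $f+1$ correct positions that are $\geq \min(trim_f(U(t)))$ by a position-counting argument on $U(t)$. The point is to show that after removing the $f$ smallest values from the full multiset $P(t)$, the remaining minimum cannot fall below this correct $(f+1)$-st smallest value. Concretely, I would argue that at most $f$ elements of $P(t)$ lie strictly below $\min(trim_f(U(t)))$: any such element is either a correct value (and there are at most $f$ of those below $\min(trim_f(U(t)))$, namely the trimmed ones) or a Byzantine value, but since the total count of values strictly below the correct $(f+1)$-st smallest is bounded by combining the $f$ trimmed correct values with the Byzantine ones, the $trim_f$ operation on $P(t)$ removes enough small values that its surviving minimum stays at or above $\min(trim_f(U(t)))$. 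The symmetric argument handles the maximum endpoint.

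The cleanest way to carry out the counting is to fix the threshold value $v = \min(trim_f(U(t)))$ and count how many elements of $P(t)$ are strictly less than $v$. Among correct robots, exactly the $f$ that were trimmed can be $< v$, so at most $f$ correct elements are below $v$. Adding the at most $f$ Byzantine elements gives at most $2f$ elements of $P(t)$ strictly below $v$. Since $trim_f$ removes the $f$ smallest elements of $P(t)$, after removal at most $2f - f = f$ elements strictly below $v$ could in principle remain — so I must be careful here, and this is the step I expect to be the main obstacle. The resolution is to note that $trim_f$ also keeps $|P(t)| - 2f = n - 2f > f$ elements total, and to show that the removed $f$ smallest elements of $P(t)$ absorb all the Byzantine values that are below $v$ together with the trimmed-correct ones; the condition $n > 3f$ guarantees there are enough correct values at or above $v$ that the new minimum is forced up to $v$. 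I would make this rigorous by arguing directly about ranks rather than about whether individual elements are correct or Byzantine.

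The symmetric treatment of $\max(trim_f(P(t))) \geq \max(trim_f(U(t)))$ follows by reflecting the one-dimensional space, so I would state it as following by symmetry rather than repeating the argument. Combining the two endpoint inequalities yields the desired interval inclusion. The whole proof is therefore a careful double-counting of how many low (respectively high) values of $P(t)$ the $trim_f$ operator can delete relative to the correct-only trim, and the role of the hypothesis $n > 3f$ is precisely to ensure that the Byzantine robots cannot inject enough extreme values to pull the trimmed range of $P(t)$ inside the trimmed range of $U(t)$.
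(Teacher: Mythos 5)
Your opening reduction is exactly the paper's: the paper proves the lemma by showing the two endpoints $U_{f+1}(t)=\min(trim_f(U(t)))$ and $U_{m-f}(t)=\max(trim_f(U(t)))$ lie in $range(trim_f(P(t)))$, which is your pair of endpoint inequalities. The problem is that the argument you then develop aims at the \emph{reverse} of the inequality you announced. Having set $v=\min(trim_f(U(t)))$, you argue that ``the remaining minimum cannot fall below'' $v$, that the ``surviving minimum stays at or above'' $v$, and that ``the new minimum is forced up to $v$''; each of these asserts $\min(trim_f(P(t)))\geq v$, which is what one would need for the opposite inclusion $range(trim_f(P(t)))\subseteq range(trim_f(U(t)))$, not the one being proved. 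Moreover that assertion is false. Take $n=4$, $f=1$, correct positions $\{0,1,2\}$, and one Byzantine robot at $-10$: then $v=\min(trim_1(U(t)))=1$, while $trim_1(P(t))=\{0,1\}$ has minimum $0<v$ (the lemma itself holds, since $[1,1]\subseteq[0,1]$). This is why the ``main obstacle'' you flag is not a technicality you can fix by rank arguments or by invoking $n>3f$: up to $2f$ elements of $P(t)$ can lie strictly below $v$ ($f$ trimmed correct ones plus $f$ Byzantine ones), and $trim_f$ removes only $f$ elements from below, so nothing can force the trimmed minimum of $P(t)$ up to $v$. Your proposed ``absorption'' resolution is attempting to prove a false statement.

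The inequality you actually need, $\min(trim_f(P(t)))\leq v$, goes the easy way, and is the paper's one-line contradiction argument: if $\min(trim_f(P(t)))>U_{f+1}(t)$, then $U_1(t),\dots,U_{f+1}(t)$ are $f+1$ elements of $P(t)$ lying strictly below every element of $trim_f(P(t))$; but any such element must be among the $f$ smallest elements removed by $trim_f$ (the $f$ largest removed ones are all $\geq\min(trim_f(P(t)))$), so at most $f$ elements of $P(t)$ can lie strictly below $\min(trim_f(P(t)))$ --- a contradiction. The maximum side is symmetric: $U_{m-f}(t)>\max(trim_f(P(t)))$ would put the $f+1$ elements $U_{m-f}(t),\dots,U_m(t)$ strictly above $\max(trim_f(P(t)))$. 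Note also where $n>3f$ genuinely enters: it guarantees $m\geq n-f>2f$, so that $trim_f(U(t))$ is nonempty and there really are $f+1$ correct positions on each side in these counts; it plays no role in pushing $\min(trim_f(P(t)))$ upward, contrary to the role you assign it.
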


\begin{proof}
We prove that: 
\begin{enumerate}
\item $\forall t~ U_{f+1}(t) \in range(trim_f(P(t))).$
\item $\forall t~ U_{m-f}(t) \in range(trim_f(P(t))).$
\end{enumerate}

\begin{enumerate}
\item 
Suppose that $U_{f+1}(t) \notin range(trim_f(P(t)))$. Then either $U_{f+1}(t) < \min(trim_f(P(t)))$ or $U_{f+1}(t) > \max(trim_f(P(t)))$.

\begin{itemize}

\item If $U_{f+1}(t) < \min(trim_f(P(t)))$ then there are at least $f+1$ positions ($U_1(t), ..., U_{f+1}(t)$) which are smaller than $\min(trim_f(P(t)))$. This contradicts the definition of $trim_f(P(t))$ (only the $f$ smallest and the $f$ largest elements of P(t) are removed). 

\item If $U_{f+1}(t) > \max(trim_f(P(t)))$ and since $|U(t)| > 2f$ (because $n>3f$), then there are also at least $f+1$ positions in U(t) greater than $\max(trim_f(P(t)))$, which leads to a contradiction.

\end{itemize}

\item The property is symmetric the precedent one.
\end{enumerate}
\end{proof}

\begin{lem}
\label{destination_mid}
Let $D_i(t)$ be the set of destinations computed with 
Algorithm~\ref{alg:1} in systems with $n>3f$. The following properties
hold: (1) $\forall i,~\forall t,~ D_i(t) \leq (U_{f+1}(t) +
U_m(t))/2$ and (2) $\forall i,~\forall t,~ D_i(t) \geq (U_1(t) + U_{m-f}(t))/2.$
\end{lem}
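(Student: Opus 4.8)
The plan is to unfold the definition of the computed destination and then reduce both inequalities to bounding the two endpoints of $trim_f(P(t))$, each of which is handled by one of the two lemmas that immediately precede the statement. By the specification of Algorithm~\ref{alg:1} a robot moves toward $median(trim_f(P(t)))$, and since $median$ returns the midpoint of the range of its argument, the destination computed from the snapshot $P(t)$ is
\[
D_i(t) = \frac{\min(trim_f(P(t))) + \max(trim_f(P(t)))}{2}.
\]
Setting $L := \min(trim_f(P(t)))$ and $H := \max(trim_f(P(t)))$, claim (1) is equivalent to $L + H \leq U_{f+1}(t) + U_m(t)$ and claim (2) to $L + H \geq U_1(t) + U_{m-f}(t)$, so it suffices to bound $L$ and $H$ against the appropriate order statistics of $U(t)$ and then divide by two.

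For (1) I would bound the upper endpoint by Lemma~\ref{range-trim}, which gives $range(trim_f(P(t))) \subset range(U(t)) = [U_1(t), U_m(t)]$ and hence $H \leq U_m(t)$; and I would bound the lower endpoint by Lemma~\ref{inside_range}, which under the hypothesis $n > 3f$ gives $range(trim_f(U(t))) \subseteq range(trim_f(P(t)))$. Since $\min(trim_f(U(t))) = U_{f+1}(t)$, this inclusion yields $L \leq U_{f+1}(t)$. Adding $L \leq U_{f+1}(t)$ to $H \leq U_m(t)$ and halving gives (1). Claim (2) is the mirror image: Lemma~\ref{range-trim} gives $L \geq U_1(t)$, while Lemma~\ref{inside_range} together with $\max(trim_f(U(t))) = U_{m-f}(t)$ gives $H \geq U_{m-f}(t)$; summing and halving gives the lower bound. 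Observe that (2) and (1) use $n > 3f$ only through Lemma~\ref{inside_range}, whereas the $H \leq U_m(t)$ and $L \geq U_1(t)$ halves need only that $trim_f$ stays inside the correct range, i.e. $n > 2f$.

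Once the two inclusion lemmas are invoked the core of the argument is a one-line computation, so the interesting part is not the inequality itself but the temporal bookkeeping hidden in the quantifier ``$\forall t$''. By the convention fixed with the cautious definition, $D_i(t)$ is the destination computed at the last look-step $t' \leq t$ at which robot $i$ was activated, so strictly speaking $L$ and $H$ are order statistics of $P(t')$ rather than $P(t)$. In the fully synchronous model every robot recomputes at every activation, so $t' = t$ and the bounds hold verbatim with $U(t)$ as written, which is the reading the subsequent shrinking argument relies on; I would therefore prove the estimate at the computation step and remark that the snapshot and the correct positions are to be taken at that step. This is the only point I expect to require care; the algebra is immediate from Lemmas~\ref{range-trim} and~\ref{inside_range}.
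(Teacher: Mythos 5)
Your proof is correct and takes essentially the same route as the paper: both arguments reduce the claim to bounding $\min(trim_f(P(t)))$ and $\max(trim_f(P(t)))$ against the order statistics of $U(t)$, using Lemma~\ref{range-trim} for one endpoint, Lemma~\ref{inside_range} (where $n>3f$ enters) for the other, and the fact that the destination is the midpoint of $range(trim_f(P(t)))$. The only difference is presentational --- the paper argues by contradiction, producing a point of $trim_f(P(t))$ beyond $U_m(t)$, while you sum the two endpoint inequalities directly --- and your closing remark on reading $D_i(t)$ at the computation step matches how the paper actually applies the lemma.
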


\begin{proof}

Take $d_1$ to be the distance between $U_{f+1}(t)$ and $U_m(t)$.
\begin{enumerate}
\item 
Suppose $D_i(t) > (U_{f+1}(t) + U_m(t))/2$ for some correct robot $i$ at time $t$. 
Then $U_{f+1}(t) < D_i(t)-d_1/2$. And by Lemma~\ref{inside_range}, $U_{f+1}$(t) is inside $range(trim_f(P(t)))$ which means that there is a position inside $range(trim_f(P(t)))$ which is smaller than $D_i(t)-d_1/2$.
Hence there must exists a position inside $range(trim_f(P(t)))$, say p, which is greater than $D_i(t)+d_1/2$ because $D_i$ is the mean of $trim_f(P(t))$. 

$p > D_i(t)+d_1/2$ implies that $p>U_m(t)$, and by lemma~\ref{range-trim} $U_m(t) \geq \max(range(trim_f(P(t))))$ so $p>\max((trim_f(P(t)))$ which contradicts the fact that p is inside $range(trim_f(P(t)))$.

\item Symmetric to the precedent property.
\end{enumerate}

\end{proof}

\begin{lem}
\label{uf-less-st}
Let $S(t)$ be a multiset of $f+1$ arbitrary elements of U(t). We
have the following properties: (1) $\forall t,~ U_{f+1}(t) \leq \max(S(t))$
and (2) $\forall t,~ U_{m-f}(t) \geq \min(S(t))$
\end{lem}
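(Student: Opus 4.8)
The plan is to treat $U(t)$ as a sorted multiset $U_1(t) \le U_2(t) \le \cdots \le U_m(t)$, so that $U_{f+1}(t)$ and $U_{m-f}(t)$ are order statistics, and then argue by a simple pigeonhole/counting step. The two claims are mirror images of one another, so I would prove (1) in full and obtain (2) by the symmetric argument.

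For (1), I would suppose toward a contradiction that $\max(S(t)) < U_{f+1}(t)$. Then every one of the $f+1$ elements of the sub-multiset $S(t)$ is strictly smaller than $U_{f+1}(t)$. But in the sorted multiset $U(t)$, the only entries that can hold a value strictly less than $U_{f+1}(t)$ are among $U_1(t), \dots, U_f(t)$: the entry $U_{f+1}(t)$ is not strictly less than itself, and every later entry is $\ge U_{f+1}(t)$. Hence $U(t)$ contains at most $f$ elements strictly below $U_{f+1}(t)$. Since $S(t)$ is a sub-multiset of $U(t)$ of cardinality $f+1$, it cannot consist entirely of such elements, because there are not enough of them available in $U(t)$ to be selected. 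This contradiction shows that at least one element of $S(t)$ is $\ge U_{f+1}(t)$, and in particular $\max(S(t)) \ge U_{f+1}(t)$.

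For (2), I would run the mirror argument: at most $f$ entries of $U(t)$, namely $U_{m-f+1}(t), \dots, U_m(t)$, can be strictly greater than $U_{m-f}(t)$, so any sub-multiset $S(t)$ of cardinality $f+1$ must contain at least one element $\le U_{m-f}(t)$, giving $\min(S(t)) \le U_{m-f}(t)$.

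I do not expect any substantive obstacle; the only point requiring care is the bookkeeping with multiplicities. One must count that the number of elements of the \emph{multiset} $U(t)$ lying strictly below the order statistic $U_{f+1}(t)$ is at most $f$ (and symmetrically that at most $f$ lie strictly above $U_{m-f}(t)$), handling possible ties at these thresholds correctly. This bound is exactly what drives the pigeonhole step, and it must be stated for sub-multisets rather than subsets since $S(t)$ may repeat positions only up to their available multiplicity in $U(t)$.
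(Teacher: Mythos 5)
Your proof is correct and is essentially the paper's own argument: the paper likewise assumes $U_{f+1}(t) > \max(S(t))$, observes this would make $U_{f+1}(t)$ strictly greater than $f+1$ elements of $U(t)$ (impossible for the $(f+1)$-st order statistic), and handles (2) by symmetry. Your version merely spells out the pigeonhole/multiplicity bookkeeping that the paper leaves implicit.
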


\begin{proof}
\begin{enumerate}
\item Assume to the contrary that $U_{f+1}(t) > \max(S(t))$. This means that $U_{f+1}(t)$ is strictly greater than at least $f+1$ elements of U(t), which leads to a contradiction.
\item The property is symmetric to the precedent.
\end{enumerate}
\end{proof}

\begin{lem}\label{limit-destinations}
Let a time $t_2>t_1$ and let $S(t)$ be a multiset of $f+1$ arbitrary elements in U(t).
If $\forall p \in S(t)$ and  $\forall t \in [t_1, t_2]~ p \leq S_{max}$  
then for each correct robot i in $U(t)$ and for each $t \in [t_1, t_2]~ D_i(t) \leq (S_{max}+U_m(t_1))/2$.
\end{lem}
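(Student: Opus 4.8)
The plan is to control $D_i(t)$ from above by splitting the bound of Lemma~\ref{destination_mid} into an ``order-statistic'' part and a ``maximum'' part, and to bound each part separately on the whole interval $[t_1,t_2]$. Concretely, I would start from Lemma~\ref{destination_mid}(1), which applies here since we are in the regime $n>3f$: for every correct robot $i$ and every $t$ at which it computes a destination, $D_i(t) \le (U_{f+1}(t)+U_m(t))/2$. Thus it suffices to show, for all $t\in[t_1,t_2]$, that $U_{f+1}(t)\le S_{max}$ and $U_m(t)\le U_m(t_1)$; averaging these two inequalities then yields $D_i(t)\le (S_{max}+U_m(t_1))/2$, which is exactly the claim.

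For the first inequality I would invoke Lemma~\ref{uf-less-st}(1): since $S(t)$ is a multiset of $f+1$ elements of $U(t)$, we have $U_{f+1}(t)\le \max(S(t))$ for every $t$. The hypothesis of the lemma states precisely that every element of $S(t)$ is at most $S_{max}$ throughout $[t_1,t_2]$, hence $\max(S(t))\le S_{max}$ and therefore $U_{f+1}(t)\le S_{max}$ on the whole interval.

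The second inequality, $U_m(t)\le U_m(t_1)$, is where the real work lies, and I expect it to be the main obstacle, since Lemma~\ref{diam-SS} only guarantees a non-increasing \emph{diameter}, not a non-increasing \emph{maximum} (a simultaneous rise of $\min(U(\cdot))$ could in principle hide a rise of $\max(U(\cdot))$ while keeping the diameter fixed). To close this gap I would re-run the argument behind Lemma~\ref{diam-SS}, but applied to $\max(U(\cdot))$ rather than to the diameter: Algorithm~\ref{alg:1} is cautious for $n>2f$ (Lemma~\ref{cautious-FS}), so at every cycle a correct robot computes a destination inside $range(U(\cdot))$ and, because cycles are atomic in the ATOM model, moves from a point of $range(U(t_1))$ towards a point of $range(U(t_1))$ and hence never leaves it. Consequently $range(U(\cdot))$ is nested in time and $\max(U(t))=U_m(t)\le U_m(t_1)$ for all $t\ge t_1$.

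Combining the two bounds through Lemma~\ref{destination_mid}(1) gives the stated conclusion. The only point to verify carefully is that the destinations $D_i(t)$ in question are those actually computed within $[t_1,t_2]$, so that the same-time bound of Lemma~\ref{destination_mid}, which relates $D_i(t)$ to the order statistics of $U(t)$ at the \emph{same} instant, is legitimately applicable.
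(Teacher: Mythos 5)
Your proof is correct and follows essentially the same route as the paper: Lemma~\ref{destination_mid} combined with Lemma~\ref{uf-less-st} and the hypothesis to get $U_{f+1}(t)\le S_{max}$, then cautiousness to get $U_m(t)\le U_m(t_1)$, and averaging the two bounds. You are in fact slightly more careful than the paper on the last step, which merely asserts $U_m(t)\le U_m(t_1)$ ``since the algorithm is cautious''; your observation that this requires the nested-range argument (cautiousness plus ATOM atomicity, as in the proof of Lemma~\ref{diam-SS}) rather than just non-increasing diameter is a legitimate and welcome clarification, not a deviation.
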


\begin{proof}
By definition of $S_{max}$ we have that $\forall t \in [t_1, t_2], max(S(t)) \leq S_{max}$.
According to Lemma~\ref{uf-less-st}, $\forall t \in [t_1, t_2]~  U_{f+1}(t) \leq max(S(t))$. 
So $\forall t \in [t_1, t_2]~ U_{f+1}(t) \leq S_{max}$.

By Lemma~\ref{destination_mid}, for each correct robot i and for each $t  \in [t_1, t_2] $, $D_i(t) \leq (U_m(t) + U_{f+1}(t))/2$. 
So for each correct robot i and for each $t  \in [t_1, t_2]$, $D_i(t) \leq (U_m(t)+S_{max})$ .
Since the algorithm is cautious, $ \forall t  \in [t_1, t_2]~ U_m(t) \leq U_m(t_1) $ and the lemma follows.

\end{proof}

\subsection{Convergence of Algorithm \ref{alg:1} in fully-synchronous
ATOM model}

In this section we address the correctness of Algorithm \ref{alg:1} in
the fully-synchronous ATOM model.

\begin{lem}
\label{shrinking-FS}
Algorithm \ref{alg:1} is shrinking for $n>2f$ in fully-synchronous ATOM model.
\end{lem}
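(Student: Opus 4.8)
The goal is to show that Algorithm~\ref{alg:1} is shrinking in the fully-synchronous ATOM model when $n>2f$, meaning there exists a constant $\alpha \in (0,1)$ such that whenever $diam(U(t))\neq 0$, the diameter of the correct robots eventually drops below $\alpha\cdot diam(U(t))$. Since the scheduler is fully-synchronous, \emph{all} correct robots execute a complete LOOK-COMPUTE-MOVE cycle together, so in one synchronous step every correct robot $i$ moves from $U_i(t)$ toward its freshly computed destination $D_i(t)=median(trim_f(P(t)))$. The plan is to prove that a single synchronous round already contracts the diameter by a fixed factor, so I expect to take $\alpha=\tfrac34$ (or thereabouts) and $t'=t+1$.

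First I would fix a time $t$ with $diam(U(t))=U_m(t)-U_1(t)>0$ and aim to bound the spread of the destination points. The key leverage comes from Lemma~\ref{destination_mid}: every correct destination satisfies $D_i(t)\leq (U_{f+1}(t)+U_m(t))/2$ and $D_i(t)\geq (U_1(t)+U_{m-f}(t))/2$. I would combine these with the crude bounds $U_{f+1}(t)\leq U_m(t)$ and $U_{m-f}(t)\geq U_1(t)$ to locate all destinations inside an interval strictly smaller than $range(U(t))$. Concretely, the largest possible destination is at most $(U_{f+1}(t)+U_m(t))/2$ and the smallest is at least $(U_1(t)+U_{m-f}(t))/2$, so the spread of destinations is at most $\tfrac12\big[(U_{f+1}(t)+U_m(t))-(U_1(t)+U_{m-f}(t))\big]$. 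Because $U_{f+1}(t)\leq U_{m-f}(t)$ when there are enough correct robots (so the lower median index does not exceed the upper one), the terms $U_{f+1}(t)-U_{m-f}(t)\leq 0$, and I can bound the whole expression by $\tfrac12 diam(U(t))$, giving the crucial contraction of the destination diameter.

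Next I would translate the contraction of destinations into contraction of the actual positions at time $t+1$. In the fully-synchronous model every correct robot reaches its computed destination in the same round (the scheduler can stop a robot short, but Lemma~\ref{diam-SS} already guarantees that cautiousness keeps every correct robot inside $range(U(t))$, and a stopped robot lies between its old position and its destination). So at time $t+1$ each correct position lies in the convex hull of $\{U_1(t),\dots,U_m(t)\}$ and, more tightly, between its starting point and its destination. I would argue that the extreme correct robots (those achieving $\min$ and $\max$ at $t+1$) have moved inward far enough that $diam(U(t+1))\leq \alpha\cdot diam(U(t))$ for the chosen $\alpha$; the cautiousness of the algorithm (Lemma~\ref{cautious-FS}) ensures no correct robot ever leaves the range, so the only effect of partial moves is to keep the diameter even smaller, which is harmless for an upper bound.

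The main obstacle I anticipate is the partial-movement subtlety: the scheduler guarantees only that robot $i$ moves at least $\delta_i$ toward $D_i(t)$ before being stopped, so a robot need not actually arrive at its destination in one round. This means I cannot simply claim $U_i(t+1)=D_i(t)$; instead I must show the diameter still shrinks by a fixed factor even under adversarial stopping. In the fully-synchronous case this is manageable because all correct robots move simultaneously and each remains trapped between its old position and its (inward-lying) destination, so the extremes can only contract; the careful point is to verify that the \emph{worst-case} stopping (robots barely moving) is handled by reasoning about destination positions rather than arrival positions, and that the shrinking definition only requires the bound to hold at \emph{some} later time $t'$, which I can always reach by fairness if one round is insufficient. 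I expect the cleanest route is to prove the one-step destination-diameter bound above, note that true positions lie within $range$ of those destinations unioned with the previous extremes, and then iterate finitely many synchronous rounds to force the diameter below $\alpha\cdot diam(U(t))$.
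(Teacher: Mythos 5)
There is a genuine gap: your central tool, Lemma~\ref{destination_mid}, is only stated and proved for $n>3f$, while the lemma you must prove covers all $n>2f$. The same restriction infects the step where you claim $U_{f+1}(t)\leq U_{m-f}(t)$: this needs $f+1\leq m-f$, i.e.\ at least $2f+1$ \emph{correct} robots, which again forces $n\geq 3f+1$. In the regime $2f<n\leq 3f$ (e.g.\ $n=2f+1$, $m=f+1$) one has $U_{f+1}(t)=U_m(t)$ and $U_{m-f}(t)=U_1(t)$, so your bound on the spread of destinations degenerates to $\tfrac12\big[(U_m(t)+U_m(t))-(U_1(t)+U_1(t))\big]=diam(U(t))$ --- no contraction at all. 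The argument therefore collapses precisely in the range of parameters where the fully-synchronous result is stronger than the semi-synchronous one (Lemma~\ref{shrinking_sym}), and iterating rounds cannot repair it, since each round's destination-spread estimate is vacuous there.

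What you missed is the key simplification that makes the fully-synchronous case easy and the $n>3f$ machinery unnecessary: under the fully-synchronous scheduler all correct robots are activated together and observe the \emph{same} snapshot $P(t)$, so they all compute the \emph{identical} destination point $median(trim_f(P(t)))$. There is no spread of destinations to bound --- it is a single common point, and by Lemma~\ref{range-trim} (which holds for $n>2f$) this point lies in $range(U(t))$. The paper's proof is then immediate: in each cycle every correct robot either reaches this common point or moves at least $\delta$ toward it, so the diameter decreases (additively, by roughly $\delta$) every cycle, and repeating cycles drives it below $\alpha\cdot diam(U(t))$ for a suitable $\alpha$; the adversarial early stopping you worry about only slows this down, it cannot prevent it, because all robots aim at the same target. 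Your fallback on Lemma~\ref{destination_mid} and destination intervals is essentially the structure of the paper's \emph{semi}-synchronous proof, where robots may observe different configurations; here it is both unnecessary and unavailable.
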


\begin{proof}
Let a configuration of robots at time $t$, and let $d_t$ 
be the diameter of correct robots at $t$.
Each cycle, all robots move towards the same destination. They move by
at least a distance of $\delta$ unless they reach their destination.

If all robots are at a distance smaller than $\delta$ from the common
destination point, 
gathering is achieved and the diameter is null. Otherwise, the robots
that are further than $\delta$ from the destination point approach it
by at least $\delta$ so the diameter decreases by at least $\delta$.
Overall, the diameter of robots decreases by at least factor of 
$\alpha=1-(\delta/d_t)$ at each cycle and thus the algorithm is shrinking.
\end{proof}

The correctness of Algorithm \ref{alg:1} follows directly from Lemma \ref{cautious-FS} and Lemma \ref{shrinking-FS}:

\begin{thm}
Algorithm \ref{alg:1} is convergent for $n>2f$ in fully-synchronous ATOM model.
\end{thm}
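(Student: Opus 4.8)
The plan is to show that this theorem is an immediate consequence of the two structural properties already established for Algorithm~\ref{alg:1}, namely that it is cautious (Lemma~\ref{cautious-FS}) and shrinking (Lemma~\ref{shrinking-FS}) whenever $n>2f$, combined with the convergence argument underlying Theorem~\ref{lem:cands}. The one point requiring care is that Theorem~\ref{lem:cands} is stated for fault-free networks, so I would first observe that the reasoning transfers verbatim to the Byzantine setting: the definitions of \emph{cautious} and \emph{shrinking}, as well as the monotonicity conclusion of Lemma~\ref{diam-SS}, are all phrased purely in terms of the correct-robot multiset $U(t)$ and never refer to the positions of Byzantine robots. Since Lemmas~\ref{cautious-FS} and~\ref{shrinking-FS} already guarantee these properties \emph{despite} up to $f$ Byzantine robots for $n>2f$, no Byzantine-specific argument is needed beyond invoking them.

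The core of the proof then combines monotonicity with geometric decay. First I would use caution together with Lemma~\ref{diam-SS} to conclude that $diam(U(t))$ is non-increasing: once the diameter of the correct robots reaches some value it can never grow again. Next I would use the shrinking property to build, starting from an arbitrary time $t_0$, an increasing sequence $t_0<t_1<t_2<\cdots$ such that $diam(U(t_{k+1}))<\alpha\cdot diam(U(t_k))$ for the fixed constant $\alpha\in(0,1)$; the shrinking definition supplies each successive $t_{k+1}$ from $t_k$. By an immediate induction this yields $diam(U(t_k))<\alpha^{k}\,diam(U(t_0))$.

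Finally, I would glue the two ingredients together. Because the diameter is non-increasing, for every $t\geq t_k$ we have $diam(U(t))\leq diam(U(t_k))<\alpha^{k}\,diam(U(t_0))$. Since $\alpha\in(0,1)$, the right-hand side tends to $0$ as $k\to\infty$, so for any $\epsilon>0$ one can choose $k$ large enough that $\alpha^{k}\,diam(U(t_0))<\epsilon$ and set $t_\epsilon=t_k$; for all $t>t_\epsilon$ every pair of correct robots is then within distance $\epsilon$, which is exactly Definition~\ref{def:byz-convergence}.

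There is no deep obstacle here---the theorem is essentially a corollary, as the authors indicate. The only thing one must be vigilant about is the interplay between the two quantifiers: shrinking alone gives a single drop below $\alpha\cdot diam$ at one future time but says nothing about the behaviour in between (indeed the Note preceding Lemma~\ref{lem:convergence} stresses that the diameter may oscillate), while caution alone gives monotonicity but no quantitative decay. It is precisely the conjunction---monotonicity forbidding the oscillatory rebounds while shrinking forces the repeated geometric drops---that yields convergence, and the argument must make this dependence explicit rather than citing either property in isolation.
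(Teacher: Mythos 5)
Your proposal is correct and follows the paper's own route exactly: the paper proves this theorem by directly combining Lemma~\ref{cautious-FS} (cautiousness for $n>2f$) with Lemma~\ref{shrinking-FS} (shrinking in the fully-synchronous model), exactly as you do. Your write-up is in fact more complete than the paper's, since you also supply the glue argument (monotonicity via Lemma~\ref{diam-SS} plus geometric decay) that the paper leaves implicit in the unproven Theorem~\ref{lem:cands}, and you correctly note that this argument, being phrased solely in terms of $U(t)$, carries over to the Byzantine setting.
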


\subsection{Correctness proof in semi-synchronous ATOM model}
In this section we address the correctness of Algorithm \ref{alg:1} in
semi-synchronous model under a $k$-bounded scheduler. 
Our proof is constructed on top of the 
auxiliary lemmas proposed in the previous sections.

\begin{lem}
\label{shrinking_sym}
Algorithm \ref{alg:1} is shrinking in semi-synchronous
 ATOM model with $n>3f$ under a k-bounded scheduler.
\end{lem}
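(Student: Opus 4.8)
The plan is to fix an arbitrary time $t_1$ with $d := diam(U(t_1)) > 0$, write $c = U_1(t_1)$ and $C = U_m(t_1)$ so that $d = C-c$, and to exhibit a fixed constant $\alpha \in (0,1)$ together with a later time $t_2$ for which $diam(U(t_2)) < \alpha\, d$. Since $n>3f$ implies $n>2f$, Algorithm~\ref{alg:1} is cautious (Lemma~\ref{cautious-FS}), so by Lemma~\ref{diam-SS} the diameter of the correct robots is non-increasing after $t_1$ and every correct robot stays inside $[c,C]$ from $t_1$ on. The whole task is therefore to force a decrease by a constant fraction of $d$ on at least one side of the range. First I would confine the destinations: by Lemma~\ref{destination_mid}, at every time $t$ every correct destination satisfies $D_i(t) \le (U_{f+1}(t)+C)/2$ and $D_i(t) \ge (c+U_{m-f}(t))/2$. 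Taking $S(t)$ to be the $f+1$ lowest correct positions at time $t$ (so $\max S(t) = U_{f+1}(t)$) and $S_{max} := \sup_{t\ge t_1} U_{f+1}(t)$, Lemma~\ref{limit-destinations} upgrades this to a time-uniform bound $D_i(t) \le (S_{max}+C)/2$ over the window; symmetrically, with $S_{min} := \inf_{t\ge t_1} U_{m-f}(t)$ one obtains $D_i(t) \ge (c+S_{min})/2$.

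The crucial quantitative input is a constant gap, namely $S_{max} \le C - \Omega(d)$ or $S_{min} \ge c + \Omega(d)$. Here I would use the cardinality bound coming from $n>3f$: there are $m \ge 2f+1$ correct robots, hence $2(m-f) > m$. Consequently the midpoint $\mu := (c+C)/2$ cannot be simultaneously crowded from both sides: at each instant either $U_{f+1}(t) \le \mu$, which pins that instant's destinations to $\le (\mu+C)/2 = C - d/4$, or $U_{m-f}(t) \ge \mu$, which pins them to $\ge (c+\mu)/2 = c + d/4$, because one cannot place $m-f$ correct robots above $\mu$ and $m-f$ of them below $\mu$ at the same time. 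Combined with Lemma~\ref{uf-less-st}, this per-instant dichotomy is what ties $U_{f+1}(t)$ (resp.\ $U_{m-f}(t)$) to the mass of correct robots on one side of $\mu$.

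The main obstacle I anticipate is promoting this \emph{per-instant} dichotomy into a single side that remains valid throughout a window long enough to realise the constant decrease; the naive cardinality argument only controls each instant separately and in principle the favoured side could flip over time. I would resolve this by a monotonicity/refilling argument on the two extreme slivers $(\mu, C]$ and $[c, \mu)$: a correct robot can enter the upper sliver only by moving toward a destination above $\mu$, which by the confinement above forces $U_{f+1}$ to be large at that instant, and symmetrically for the lower sliver. Tracking the number of correct robots in each sliver, one argues that the two slivers cannot both be repeatedly refilled, so one of them stays depleted for the whole window, and the corresponding uniform bound ($S_{max}\le C - d/4$ or $S_{min}\ge c + d/4$) holds throughout. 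This temporal-stability step is the delicate part of the argument and the place where the $n>3f$ cardinality is genuinely used.

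Finally I would convert the one-sided confinement into a multiplicative decrease using the $\delta$-move guarantee and the $k$-bounded, fair scheduler. Assume without loss of generality that the persistent bound is $S_{max} \le C - d/4$, so that every computed destination stays $\le C - d/8$. A correct robot currently above $C-d/8$ moves, on each of its activations, at least $\delta$ toward a destination $\le C-d/8$ (or reaches it), while no correct robot can ever climb above $C-d/8$ since its target is capped there; because the scheduler is $k$-bounded and fair, within a bounded number of rounds every correct robot is activated enough times to descend to or below $C-d/8$. At that time $t_2$ we have $U_m(t_2) \le C - d/8$ and $U_1(t_2) \ge c$, hence $diam(U(t_2)) \le d - d/8 = \tfrac{7}{8}\,d$, so $\alpha = \tfrac{7}{8}$ witnesses that Algorithm~\ref{alg:1} is shrinking in the semi-synchronous ATOM model under a $k$-bounded scheduler, as claimed.
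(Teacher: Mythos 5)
Your setup (cautiousness, the destination-confinement via Lemma~\ref{destination_mid}, and the per-instant dichotomy ``either $U_{f+1}(t)\leq\mu$ or $U_{m-f}(t)\geq\mu$'', which is the same counting fact the paper uses at $t_0$ in the form $d_1\geq d_0/2$) is fine, but the step you yourself flag as delicate --- promoting the per-instant dichotomy to a \emph{persistent} gap $S_{max}\leq C-d/4$ or $S_{min}\geq c+d/4$ --- is a genuine gap, and moreover the claim itself is false, not just unproven. First, the supporting assertion ``entering the upper sliver forces $U_{f+1}$ to be large at that instant'' is nearly vacuous: a destination above $\mu$ only forces $U_{f+1}(t) > 2\mu - C = c$. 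Second, here is a scenario (with $f=1$, $n=4$, correct robots at $0,1,1$, so $c=0$, $C=1$, $\mu=1/2$) where both branches fail within exactly the window you need. The Byzantine robot first stands at $1$; the scheduler activates the robot at $0$, whose trimmed median is then $1=C$, violating any cap of the form $C-d/8$ already at the first step. The Byzantine robot then moves next to that bottom robot, and the scheduler activates a top robot repeatedly (up to $k$ times is allowed): each such activation moves it to $(U_1(t)+U_{m-f}(t))/2$, halving the gap $U_{m-f}(t)-c$. After $j$ activations that gap is about $d/2^{j}$, so $S_{min}$ drops below $c+d/4$ after two activations, while $S_{max}=C$ from the start. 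Neither one-sided bound of order $d/4$ survives; the only gap that does survive is of order $d/2^{k(f+1)}$, and with your literal definitions ($\sup$ and $\inf$ over all $t\geq t_1$) both branches can fail simultaneously, since the median can visit both ends of the range before convergence.

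What is missing is precisely the paper's quantitative mechanism, which replaces persistence-by-dichotomy with an \emph{erosion-rate} bound. The paper fixes the favorable side only at the initial instant $t_0$ (WLOG $U_m(t_0)-U_{f+1}(t_0)\geq d_0/2$), tracks the pack $S(t)$ of the $f+1$ lowest correct robots, and shows via Lemma~\ref{destination_mid} plus cautiousness that each activation of a pack member can at most \emph{halve} the gap $U_m(t_0)-\max(S(t))$. The $k$-bounded scheduler is then used exactly here (not merely in a final mop-up): before every correct robot has been activated once, pack members are activated at most $k(f+1)$ times, so throughout that window the gap stays at least $d_1/2^{k(f+1)}\geq d_0/2^{k(f+1)+1}$, whence (Lemma~\ref{limit-destinations}) all destinations stay below $U_m(t_0)-d_0/2^{k(f+1)+2}$, and the $\delta$-guarantee gives a diameter decrease of $\min\{\delta, d_0/2^{k(f+1)+2}\}$, i.e., $\alpha=\max\{1-\delta/d_0,\,1-1/2^{k(f+1)+2}\}$. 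Note that the shrinking factor is necessarily exponentially close to $1$ in $k(f+1)$ against this adversary; your hoped-for constant $\alpha=7/8$ is unattainable, which is a useful sanity check that a constant-fraction persistent gap cannot exist.
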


\begin{proof}

Let $U_1(t_0), ..., U_m(t_0)$ be a configuration of correct robots at
the initial time $t_0$, and assume that they are ordered 
from left to right. Let $d_0$ be the diameter of correct robots at $t_0$,
$d_1=distance(U_{f+1}(t_0), U_m(t_0))$ and $d_2=distance(U_1(t_0)$, 
$U_{m-f}(t_0))$. We assume without loss of generality that $d_1 > d_2$.
Note that in this case $d_1\geq d_0/2$, otherwise $d_1+d_2<d_0$ which
is impossible since $|U(t)| > 2f$.

Let $S(t)$ be the multiset ${U_1(t), ..., U_{f+1}(t)}$. We have at $t_0$: $\max(S(t_0))=U_m(t_0)-d_1$. 
Let $t_1\geq t_0$ be the first time all correct robots
have been activated at least once since $t_0$. 
We prove in the following that at $t_1$, $\max(S(t_1)) \leq U_m(t_0)-d_1/2^{k(f+1)}$.
 
According to Lemma~\ref{uf-less-st}, $\forall t \in [t_0, t_1]~ U_{f+1}(t) \leq
max(S(t))$ and by Lemma~\ref{destination_mid}, $D_i(t) \leq (U_m(t) +
U_{f+1}(t))/2$ for 
each correct robot $i$ and for each $t  \in [t_1, t_2] $. 
So $D_i(t) \leq (U_m(t) + max(S(t)) )/2$.
Since the algorithm is cautious, $\forall t>t_0~ U_m(t) \leq U_m(t_0) $.
So $D_i(t) \leq (U_m(t_0) + max(S(t)) )/2$  for each correct robot $i$
and for each $t  \in [t_1, t_2] $. 

Recall that initially $\max(S(t_0))=U_m(t_0)-d_1$.
Therefore, when at some time $t^\prime > t_0$, a robot in
$S(t^\prime)$ is activated, its calculated 
destination is smaller than $(U_m(t_0) + \max(S(t^\prime)) )/2$. 
Then $\max(S(t^\prime+1)) \leq (U_m(t_0) + \max(S(t^\prime)) )/2$.

Recall that $t_1$ is the first time such that all robots are activated at least once since $t_0$. Since the scheduler is k-bounded, the robots in $S(t)$ may have been activated at most k times each. So between $t_0$ and $t_1$, there are at most $k(f+1)$ activations of robots in $S(t)$. 
Therefore at $t_1$, 
$\max(S(t_1)) \leq (U_m(t_0) - d_1/2^{k(f+1)})$. And since $d_1>d_0/2, \max(S(t_1)) \leq (U_m(t_0) - d_0/2^{k(f+1)+1})$. 

So between $t_0$ and $t_1$ all robots are activated at least once, and
according to Lemma~\ref{limit-destinations}, all their calculated
destinations are less than or equal to $(U_m(t_0) - d_0/2^{k(f+1)+2})$.
Since robots are guaranteed to move toward their destinations by at
least a distance $\delta$ before they can be stopped by the scheduler,
at $t_1$, all the positions of 
$U(t_1)$ are $\leq U_m(t_0)-min\{\delta, d_0/2^{k(f+1)+2}\}$.\\
Thus by setting $\alpha=\max\{1-\delta/d_0,1-1/2^{k(f+1)+2}\}$ at $t_1$, the lemma follows.
\end{proof}

The convergence proof of Algorithm \ref{alg:1} directly follows from 
Lemma \ref{shrinking_sym} and Lemma \ref{cautious-FS}.

\begin{thm}
Algorithm \ref{alg:1} is convergent in semi-synchronous ATOM model for
$n>3f$ under a k-bounded scheduler.
\end{thm}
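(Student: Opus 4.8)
The plan is to derive the theorem as an immediate consequence of the two structural properties already established for Algorithm~\ref{alg:1}, combined with the general convergence criterion of Theorem~\ref{lem:cands}. First I would observe that the hypothesis $n>3f$ implies $n>2f$, so Lemma~\ref{cautious-FS} applies and Algorithm~\ref{alg:1} is cautious. Independently, Lemma~\ref{shrinking_sym} establishes that the same algorithm is shrinking in the semi-synchronous ATOM model under a $k$-bounded scheduler whenever $n>3f$. Thus both premises of the cautious-plus-shrinking characterization are met, and it remains to assemble them into a statement about the diameter of the correct robots tending to zero.

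The core of the argument is the interplay between the two properties, exactly as in Theorem~\ref{lem:cands}. Cautiousness, through Lemma~\ref{diam-SS}, guarantees that in the ATOM model $diam(U(t))$ is non-increasing over time: once the correct robots have been squeezed into a smaller range, they never spread out again. Shrinking supplies the complementary guarantee that the range does not stagnate: there is a fixed $\alpha\in(0,1)$ such that from any time $t$ with $diam(U(t))\neq 0$ there is a later time $t'$ with $diam(U(t'))<\alpha\cdot diam(U(t))$. Chaining these two facts, I would argue that the sequence of successive shrinking events drives the diameter below $\alpha^{j}\cdot diam(U(t_0))$ after the $j$-th such event, while monotonicity forbids any recovery in between; since $\alpha^{j}\to 0$, for every $\epsilon>0$ there is a time after which $diam(U(t))<\epsilon$, which is precisely the Byzantine convergence specification of Definition~\ref{def:byz-convergence}.

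The one point requiring care is that Theorem~\ref{lem:cands} is phrased for fault-free networks, whereas the present statement concerns a Byzantine environment. I expect this to be the only genuine obstacle, and I would dispatch it by noting that both the \emph{cautious} and the \emph{shrinking} definitions are stated purely in terms of $U(t)$, the multiset of positions of the \emph{correct} robots, and that Definition~\ref{def:byz-convergence} likewise constrains only correct robots. Lemmas~\ref{cautious-FS} and~\ref{shrinking_sym} already establish these two properties \emph{in the presence} of the $f$ Byzantine robots, so the adversarial behaviour is fully absorbed into their hypotheses. Consequently the abstract diameter argument of Theorem~\ref{lem:cands} transfers verbatim to the correct robots, and the theorem follows.
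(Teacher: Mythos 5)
Your proposal is correct and follows exactly the paper's route: the paper likewise derives the theorem directly from Lemma~\ref{cautious-FS} (cautiousness, valid since $n>3f$ implies $n>2f$) and Lemma~\ref{shrinking_sym} (shrinking under the $k$-bounded semi-synchronous scheduler), via the cautious-plus-shrinking criterion. Your additional elaborations --- the explicit $\alpha^{j}$ diameter argument using Lemma~\ref{diam-SS}, and the observation that Theorem~\ref{lem:cands} transfers to the Byzantine setting because all three definitions are phrased over $U(t)$ only --- merely make explicit what the paper leaves implicit in its one-line proof.
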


\section{Concluding remarks}
\label{sec:conclusion}

We studied the problem of convergence of mobile oblivious robots in a uni-dimensional space when some of the robots can exhibit arbitrary malicious behavior. We showed that there is a tradeoff between system synchrony (how tightly synchronized the robots are) and malicious tolerance, as more asynchronous systems lead to less Byzantine tolerance. One originality of our approach is the connection with previous results in fault-tolerant distributed computing with respect to approximate Byzantine agreement. Three immediate open questions are raised by our study:
\begin{enumerate}
\item we consider a uni-dimensional space, which leads to questioning the applicability of our approach in multi-dimensional spaces,
\item we presented lower bound for the class of cautious algorithms, which leaves the possibility of non-cautious solutions for the same problem open,
\item the model we consider in this paper is either fully-synchronous or semi-synchronous, which leads to the possible investigation of purely asynchronous models for the same problem (\emph{e.g.} CORDA~\cite{Pre05}).
\end{enumerate}

\bibliographystyle{plain}
\bibliography{convergence}
 
\end{document}